\newtheorem{thm}{Theorem}
\newtheorem{prop}{Proposition}
\newtheorem{coro}{Corollary}
\newtheorem{remark}{Remark}
\newcommand{\D}[1]{\textrm{d} {#1}}
\newcommand{\p}[1]{\textrm{Pr}\left({#1}\right)}
\newcommand{\pp}[1]{\textrm{Pr}({#1})}
\newcommand{\ee}[1]{\textrm{E}\left({#1}\right)}
\newcommand{\eee}[1]{\textrm{E}({#1})}
\newcommand{\vv}[1]{\textrm{Var}\left({#1}\right)}
\newcommand{\vvv}[1]{\textrm{Var}({#1})}
\newcommand{\I}[1]{\textbf{1}_{\{{#1}\}}}
\def\eqd{\,{\buildrel D \over =}\,}
\begin{document}
\title{On Asymptotic Statistics for Geometric Routing Schemes in Wireless Ad-Hoc Networks}

\author{Armin Banaei, Daren B.H. Cline, Costas N. Georghiades, and Shuguang Cui
\thanks{A. Banaei, C. Georghiades, and S. Cui are with the Department
of Electrical and Computer Engineering, Texas A\&M University.}
\thanks{D. Cline is with the Department of Statistics, Texas A\&M University.}}

\maketitle

\begin{abstract}

In this paper we present a methodology employing statistical analysis and stochastic geometry to study geometric routing schemes in wireless ad-hoc networks. In particular, we analyze the network layer performance of one such scheme, the random $\frac{1}{2}$disk routing scheme, which is a localized geometric routing scheme in which each node chooses the next relay randomly among the nodes within its transmission range and in the general direction of the destination. The techniques developed in this paper enable us to establish the asymptotic connectivity and the convergence results for the mean and variance of the routing path lengths generated by geometric routing schemes in random wireless networks. In particular, we approximate the progress of the routing path towards the destination by a Markov process and determine the sufficient conditions that ensure the asymptotic connectivity for both dense and large-scale ad-hoc networks deploying the random $\frac{1}{2}$disk routing scheme. Furthermore, using this Markov characterization, we show that the expected length (hop-count) of the path generated by the random $\frac{1}{2}$disk routing scheme normalized by the length of the path generated by the ideal direct-line routing, converges to $3\pi/4$ asymptotically. Moreover, we show that the variance-to-mean ratio of the routing path length converges to $9\pi^2/64-1$ asymptotically. Through simulation, we show that the aforementioned asymptotic statistics are in fact quite accurate even for finite granularity and size of the network.
\end{abstract}

\begin{IEEEkeywords}
Geometric Routing Schemes, Asymptotic Network Connectivity, Asymptotic Path Length Statistics, Statistical Analysis, Stochastic Geometry, Markov Process.
\end{IEEEkeywords}

%
\IEEEpeerreviewmaketitle

\section{Introduction}
\label{sec:Introduction}

\IEEEPARstart{A}{} wireless ad-hoc network consists of autonomous wireless nodes that collaborate on communicating information in the absence of a fixed infrastructure. Each of the nodes might act as a source/destination node or as a relay. Communication occurs between a source-destination pair through a single-hop transmission if they are close enough, or through multi-hop transmissions over intermediate relaying nodes if they are far apart. The selection of relaying nodes along the multi-hop path is governed by the adopted routing scheme.

The conventional method to establish a routing path between a given source-destination pair is through exchanges of control packets containing the complete network topology information \cite{Dijkstra}, which creates scalability issues when the network size becomes large. One way to reduce the overhead for global topology inquiries is to build routes on demand via flooding techniques \cite{AODV}. However, such routing protocols essentially suffer from a similar issue of large signaling overheads. To deal with the above issues, Takagi and Kleinrock \cite{MFR} introduced the first geographical (or position-based) routing scheme, coined as Most Forward within Radius (MFR), based on the notion of progress:\footnote{It should be noted that the reduction in complexity comes at the cost of knowing the location of the neighboring nodes in addition to that of the destination.} Given a transmitting node $S$ and a destination node $Dst$, the progress at relay node $V$ is defined as the projection of the line segment $SV$ onto the line connecting $S$ and $Dst$. In MFR, each node forwards the packet to the neighbor with the largest progress (e.g., node $V_2$ in Fig. \ref{fig:GFRs}), or discards the packet if none of its neighbors are closer to the destination than itself. There are some other variants of the geographical routing scheme in the literature \cite{NFP}--\cite{Zorzi03}, which are similar to MFR. In \cite{NFP}, the authors introduced the Nearest Forward Progress (NFP) method that selects the nearest neighbor of the transmitter with forward (positive) progress (e.g., node $V_1$ in Fig. \ref{fig:GFRs}); in \cite{DIR}, the Compass Routing (also referred to as the DIR method) was proposed, where the neighbor closest to the line connecting the sender and the destination is chosen (e.g., node $V_3$ in Fig. \ref{fig:GFRs}); in \cite{Zorzi03}, the authors considered the Shortest Remaining Distance (SRD) method, where the neighbor closet to the destination is selected as the relay (e.g., node $V_4$ in Fig. \ref{fig:GFRs}).

\begin{figure}
  \centering
  \includegraphics[width=2.5 in]{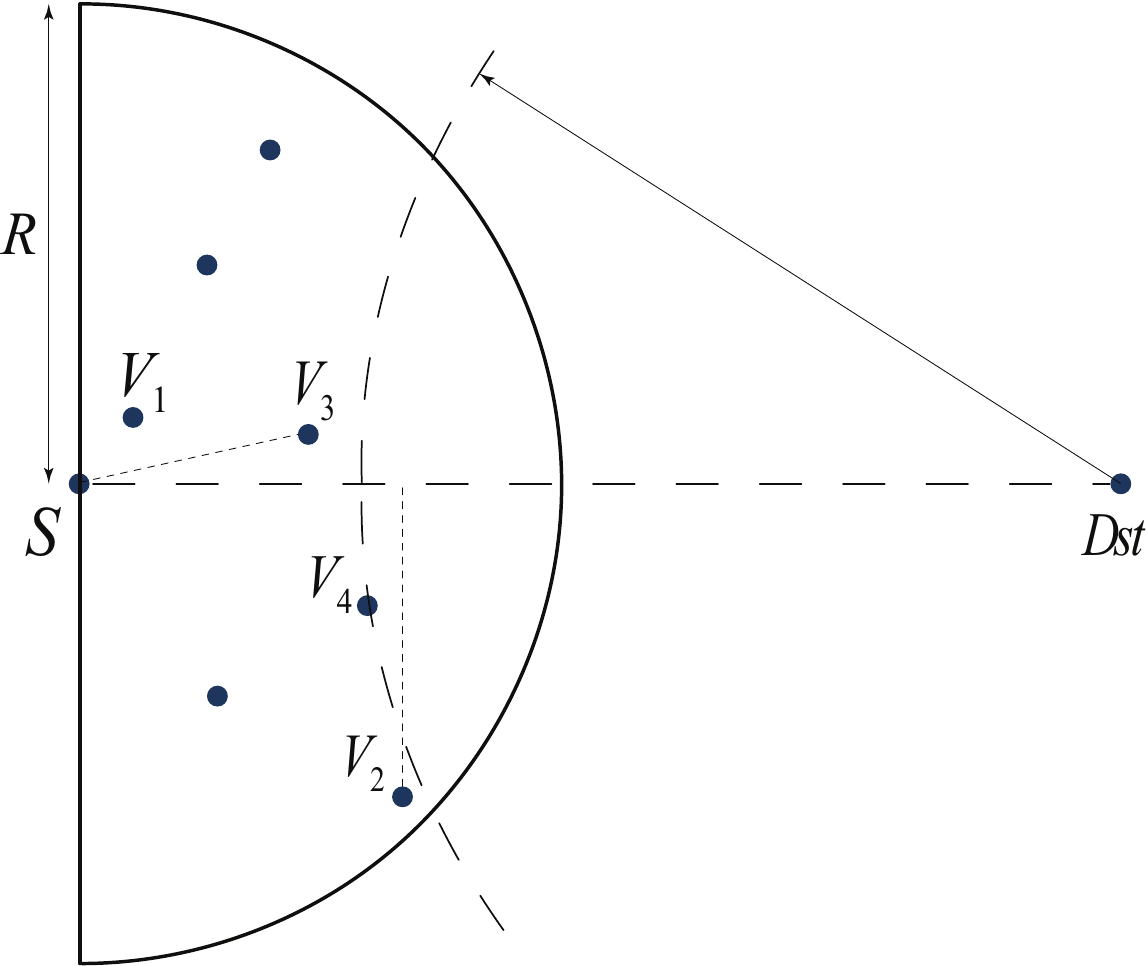}
  \caption{Some variants of geometric routing schemes: The source node $S$ has different choices to find a relay node for further forwarding a message to the destination $Dst$. $V_1$ = Nearest Forward Progress (NFP), $V_2$ = Most Forward within Radius (MFR), $V_3$ = Compass Routing (DIR), $V_4$ = Shortest Remaining Distance (SRD).}
  \label{fig:GFRs}
\end{figure}

Geographical routing protocols might fail for some network configurations due to dead-ends or routing loops. In these cases, alternative routing strategies, such as route discovery based on flooding \cite{flood} and face routing \cite{GPSR} can be deployed. However, it has been shown in \cite{GuptaKumar} that for dense wireless networks, the MFR-like routing strategies will succeed with high probability and there is no need to resort to recovery methods such as face routing. In this paper we study the network layer performance of geographical routing schemes in such dense or large wireless networks; and we expect to observe a similar high-probability successful routing performance (the proof of this claim is presented in Section \ref{subsec:Expected Length}).

Below we present a methodology employing statistical analysis and stochastic geometry to study geometric routing schemes in wireless ad-hoc networks. We consider a wireless ad-hoc network consisting of wireless nodes that are distributed according to a Poisson point process over a circular area, where nodes are randomly grouped in source-destination pairs and can establish direct communication links with other nodes that are within a certain range. We determine the conditions under which, in such a network, all source-destination node pairs are connected via the adopted geographical routing scheme with high probability and quantify the asymptotic statistics (mean and variance) for the length of the generated routing paths. In particular, we focus on a variant of the geographical routing schemes, namely the random $\frac{1}{2}$disk routing scheme, as an example, where each node chooses the next relay uniformly at random among the nodes in its transmission range over a $\frac{1}{2}$disk with radius $R$ oriented towards the destination. This scheme is similar to the geometric routing scheme discussed in \cite{MFR}, in which one of the nodes with forward progress is chosen as a relay at random, arguing that there is a trade-off between progress and transmission success.

We chose the random $\frac{1}{2}$disk routing scheme mainly for tractability and simplicity in mathematical characterization. However, the solution techniques developed in this paper can be used (with some modifications) to study other variants of geographical routing schemes, such as MFR, NFP, DIR, etc, which will be further discussed in Section \ref{sec:generalization}. Moreover, the random $\frac{1}{2}$disk routing scheme can be used to model situations where nodes have partial or imprecise routing information and the locally optimal selection criterion of greedy forwarding schemes fails \cite{shakkotai}, e.g., when nodes have perfect knowledge about their destination locations but imprecise information about their own locations, or when nodes only know the half-plane over which the final destination lies such that randomly forwarding the packet to a node in the general direction of the destination is a plausible choice.

There has been a considerable interest regarding the network connectivity and the average length of the route generated by geographical routing schemes under different network settings \cite{shakkotai}, \cite{GuptaKumar2}--\cite{Baccelli}. The authors in \cite{GuptaKumar2} considered a wireless network that consists of $n$ nodes uniformly distributed over a disc of unit area with each node transmission covering an area of $r(n) = (\log n+c(n))/n$. They show that this network is connected asymptotically with probability one \emph{if and only if} $c(n)\to\infty$ as $n\to\infty$. Although the asymptotic expression that they derived for the sufficient transmission range is similar to ours, their notion of connectivity is quite different from ours. In \cite{GuptaKumar2}, the network is connected as long as it is percolated, i.e., the network contains an infinite-order component, where no constraints are considered for the paths connecting source-destination pairs. However, the routing paths that we consider in this work have more structure such that we need a different proof technique to prove the asymptotic connectivity of the network. Xing et al. showed in \cite{Xing} that the route establishment can be guaranteed between any source-destination pair using greedy forwarding schemes if the transmission radius is larger than twice the sensing radius in a fully covered homogeneous wireless sensor network. In \cite{analys} the authors derived the critical transmission radius to be $\sqrt{\frac{\beta_0 \log n}{n}}$ which ensures network connectivity asymptotically almost surely (a.a.s.) based on the SRD routing method, where $\beta_0 = 1/(2\pi/3-\sqrt{3}/2)$.

In \cite{Bordenave}, Bordenave considered the maximal progress navigation for small world networks and showed that small world navigation is regenerative.\footnote{This routing scheme, unlike ours, assumes nonnegative progress in each hop.} It is shown furthermore in \cite{Bordenave} that as the cardinality of the navigation (or routing) path grows, the expected number of hops converges, without providing an explicit value for the limit. Baccelli et al. \cite{Baccelli} introduced a time-space opportunistic routing scheme for wireless ad-hoc networks which utilizes a self-selection procedure at the receivers. They show through simulations that such opportunistic schemes can significantly outperform traditional routing schemes when properly optimized. Furthermore, they analytically proved the asymptotic convergence of such schemes. In \cite{shakkotai}, Subramanian and Shakkottai studied the routing delay (measured by the expected length of the routing path) of geographic routing schemes when the information available to network nodes is limited or imprecise. They showed that one can still achieve the same delay scaling even with limited information. Note that the asymptotic delay expression derived in \cite{shakkotai} is similar to the one we derive in this paper; however, our proof technique is more constructive and enables us to derive tight bounds for the mean and the variance of the routing-path lengths in a network of arbitrary size, together with the exact expressions for their asymptotes. Moreover, in \cite{shakkotai} the authors presumes that the progress (as defined in \cite{MFR} and described earlier) at nodes along the routing path form a sequence of i.i.d. random variables. However, as we show later (cf. Proposition \ref{prop1}), this assumption may not hold for Poisson distributed networks of arbitrary finite sizes as the distribution of nodes contained in the transmission range of a given node along a routing path depends on the history of the routing path up to this node, i.e., the progress at each hop is history dependent. Hence, it is neither independent nor identically distributed; but we show that, as the size of the network (either density or area) goes to infinity, the conditional distribution of the progresses along the routing path given the two previous hops, in fact, depends asymptotically only on the last hop.

The remainder of this paper is organized as follows. In Section \ref{sec:SystemModel} we introduce the system model and describe the random $\frac{1}{2}$disk routing scheme. Then we define the notion of connectivity based on generic geometric routing schemes and state the main results of the paper in a theorem regarding the connectivity and the statistical performance of the random $\frac{1}{2}$disk routing scheme. In Sections \ref{sec:connectivity} and \ref{sec:Path Length Statistics} we prove the claims made in this theorem. In Section \ref{sec:connectivity}, we establish sufficient conditions on the transmission range that ensure the existence of a relaying node in every direction of a transmitting node for both dense and large-scale networks. In Section \ref{sec:Path Length Statistics}, we study the stochastic properties of the paths generated by the random $\frac{1}{2}$disk routing scheme. Specifically, in Section \ref{subsec:Markov Approximation}, we prove that the routing path progress conditioned on the previous two hops can be approximated with a Markov process. In Section \ref{subsec:Expected Length}, using the Markovian approximation, we derive the asymptotic expression for the expected length, and in Section \ref{subsec:Length Variance} we derive the asymptotic expression for the variance of the length of the random $\frac{1}{2}$disk routing paths. In Section \ref{sec:Simulation Results}, we present some simulation results to validate our analytical results. In Section \ref{sec:generalization}, we present some guidelines on how to generalize the results derived for the random $\frac{1}{2}$disk routing scheme to other variants of the geometric routing schemes. We conclude the paper in Section \ref{sec:conclusion}.

\section{System Model}
\label{sec:SystemModel}

Consider a circular area $A$ over which a network of wireless nodes resides.\footnote{The results will carry over, with some minor considerations, to any convex region with bounded curvature.} Nodes are distributed according to a homogeneous Poisson point process with density $\lambda$. In this work we adopt a continuum model for the network where each node is a zero-dimensional point in a unit-area disk.\footnote{This is due to the asymptotic nature of the results presented in this work. Furthermore, a Poisson point process model for the node locations can be considered on a discrete space of countably infinite isolated points (for instance, lattices). Adapting such a model does not change the nature of the results presented.} As such, network nodes can be located at any geometric locations $(x,y) \in \mathbb{R}^2$ such that $x^2+y^2 \leq \frac{|A|}{\pi}$, where $|A|$ denotes the area of region $A$.

Each node picks a destination node uniformly at random among all other nodes in the network, and operates with a fixed transmission power that can cover a disk of radius $R = R(\lambda,|A|)$.\footnote{As mentioned earlier, we are only interested in the network layer performance of the network; as such, we do not consider physical layer related issues such as interference. However, as a rule of thumb (cf. \cite{GuptaKumar}), to minimize the interference among wireless nodes we are interested in the smallest transmission radius that ensures network connectivity in this paper.}

For a generic geometric routing scheme, when the targeted destination node is out of the one-hop transmission range $R$ of a given transmitting node, the next relay is selected (based on some rules) among the nodes contained in the \emph{relay selection region} (RSR) of the transmitting node, where the RSR, in general, can be any subset of a full disk of radius $R$ centered at the transmitting node. For example, the RSR for all the geometric routing schemes cited in the introduction section is a \emph{$\frac{1}{2}$disk} of radius $R$ centered at the transmitting node and oriented towards the destination (denoted by $\frac{1}{2}$RSR). We define the rule that governs the selection of the next relay in each node's RSR as the \emph{relay selection rule} (RSL). For example, the RSL for MFR is to choose the node with the largest ``progress'' towards the destination among the nodes contained in its $\frac{1}{2}$RSR. We define the progress $x'_V$ at a relay node $V$ as in \cite{MFR}, and described in the introduction section.

We define the network to be \emph{connected} if for any source-destination node pair in the network, there exists a path constructed by a \emph{finite sequence of relay nodes} complying with the RSL, with \emph{high probability};\footnote{According to this definition, the network is connected if starting from any source and choosing relays based on the routing scheme, the destination is reachable with high probability.} henceforth, we call such a relay sequence a \emph{routing path}. Note that a node can potentially act as a \emph{relay} only if it is contained in the RSR of the current transmitting node. For the sake of definition, we claim that the network is connected if the set of network nodes is empty.

In this paper we study a special case of localized geometric routing schemes, namely the \emph{random $\frac{1}{2}$disk routing scheme}, where for each transmitting node $S$ in the network, the next relay $V$ is selected \emph{uniformly at random} among the nodes contained in the $\frac{1}{2}$RSR of $S$. We denote the relay selection rule of the random $\frac{1}{2}$disk routing scheme by rRSL. Observe that according to our routing scheme, the next chosen relay might be farther away from the destination than the current transmitting node.

In the following, we present a theorem that summarizes the main results of this paper on the random $\frac{1}{2}$disk routing scheme, regarding i) the sufficient conditions on $R(\lambda,|A|)$, which ensure the existence of a relaying node in any direction of a particular transmitting node based on a generalized version of $\frac{1}{2}$RSR; ii) the mathematical model describing the routing path; iii) the mean asymptotes of the path-lengths established by the random $\frac{1}{2}$disk routing scheme; iv) the corresponding variance asymptotes; and v) the asymptotic network connectivity with the random $\frac{1}{2}$disk routing scheme. For the generalized version of the $\frac{1}{2}$RSR, we assume that the RSR of a node is a wedge of angle $2\pi \eta$ with radius $R$, where $0 < \eta \leq 1$ (hereafter called $\eta$disk or $\eta$RSR, interchangeably). Hence, the $\frac{1}{2}$RSR is a special case of the $\eta$RSR with $\eta = 1/2$.


Note that in this paper we define the \emph{length} of a routing path as the number of hops traversed over the routing path between a source and its destination. For notational convenience, we let $N := \lambda |A|$ designate the expected number of nodes in the network region of area $|A|$ and $d = d(N) := \frac{\pi R^2}{|A|}$ denote the normalized area of a full disk with radius $R$ relative to the area of the whole region, such that $dN$ is the expected number of nodes in such a disk. The \emph{asymptotic} nature of the results presented in this paper is due to $N\to\infty$, which can represent results for either large-scale networks (i.e., when $|A|\to\infty$ with a fixed $\lambda$) or dense networks (i.e., when $\lambda\to\infty$ with a fixed $|A|$).

Also, $f(n) = O\left(g(n)\right)$ means that there exist positive constants $c$ and $M$ such that $f(n)/g(n) \leq c$ whenever $n \geq M$, $f(n) = o\left(g(n)\right)$ means that $\lim f(n)/g(n) \to 0$ as $n\to\infty$, $f(n) \sim g(n)$ means that $\lim f(n)/g(n) \to 1$ as $n\to\infty$, and $f(n) = \Theta\left(g(n)\right)$ means that both $f(n) = O\left(g(n)\right)$ and $g(n) = O\left(f(n)\right)$.

\begin{thm}
\label{thm1}
Consider a Poisson distributed wireless network with an average node population $N$ deployed over a circular area $A$. Each node picks
a destination node uniformly at random among all other nodes in the network. Assume all nodes have the same transmission range $R(N)$ that covers a normalized area $d = d(N)$ and let $x'$ be the progress at each node. Choosing $R(N)$ such that $\eta dN + \log d \to +\infty$ as $N\to\infty$, we have
\begin{enumerate}
  \item[i)] the $\eta$disk of each node in the network pointing at any direction in which its targeted destinations may lie contains at least one relaying node asymptotically almost surely (a.a.s.);
  \item[ii)] the routing path progress can be approximated to a ``second-order'' with a Markov process; more specifically, the conditional distribution of the next hop given the previous two hops, asymptotically depends only on the last hop.
  \item[iii)] Using the Markovian approximation, we have that the length $\nu$ of the random $\frac{1}{2}$disk routing path is asymptotically finite with the asymptotic expected value $\ee{\nu} \sim \frac{32}{15}\frac{1}{\sqrt{d}}$; specifically, the expected length of the random $\frac{1}{2}$disk routing path connecting a source-destination pair that is $h$-distance apart satisfies $\ee{\nu\mid h} \sim \frac{h}{\ee{x'}} = \frac{3\pi}{4}\frac{h}{R}$ as $N\to\infty$;
  \item[iv)] the variance-to-mean ratio of the routing path length satisfies $\frac{\vv{\nu}}{\ee{\nu}} \sim \frac{\vv{x'}}{\ee{x'}^2} = \frac{9\pi^2}{61}-1$ as $N\to\infty$;
  \item[v)] the network is asymptotically connected with the random $\frac{1}{2}$disk routing scheme with high probability,
\end{enumerate}
where the expectation is taken over all realizations of the network nodes, source-destination pair assignments, and the routing paths between source-destination pairs.
\end{thm}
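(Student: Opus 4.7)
The plan is to prove the five claims essentially in the stated order, since (iii)--(v) rely on (i) and (ii). For part (i), I would cover the continuum of destination directions by a polynomially fine angular grid, chosen so that by an elementary geometric inclusion the full $\eta$-disk in any continuous direction contains a relay whenever a slightly narrower wedge in the nearest grid direction does. The number of nodes in each such wedge is Poisson with mean of order $\eta dN$, so the probability of emptiness is $e^{-\eta dN}$. Applying a union bound over the (Poisson-many, so w.h.p.\ $O(N)$) nodes and the discrete directions, the expected number of empty wedges is suppressed by $e^{-\eta dN}$ times at most a polynomial factor in $N$ and $1/d$; the scaling $\eta dN + \log d \to +\infty$ is calibrated precisely to make this quantity vanish.

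For part (ii), I would identify the only source of non-Markovian dependence as the geometric overlap between the RSRs of $S_{k-1}$ and $S_k$. Outside that overlap the Poisson process is independent of the past by complete independence of Poisson intensities on disjoint sets; inside the overlap, Palm-style conditioning on the event that $S_k$ was the uniformly chosen pick from $S_{k-1}$'s RSR slightly biases the conditional intensity. I would quantify this bias and show that, because the overlap holds at most a constant fraction of the expected $\eta dN \to \infty$ candidates, the single ``extra'' conditioned point contributes vanishing weight to the conditional law of $S_{k+1}$. This establishes the asymptotic Markov property: given $S_k$, the next hop is independent of $S_{k-1}$ in the limit.

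For parts (iii) and (iv), I would compute the per-step progress statistics by direct integration over the uniformly sampled $\frac{1}{2}$-disk of radius $R$, obtaining $\eee{x'} = \frac{4R}{3\pi}$ and $\eee{x'^2} = \frac{R^2}{4}$. Part (ii) legitimizes treating the progresses as asymptotically i.i.d., so elementary renewal-theoretic estimates give $\eee{\nu \mid h} \sim h/\eee{x'}$ and $\vvv{\nu \mid h} \sim h\,\vvv{x'}/\eee{x'}^3$, whence the variance-to-mean ratio tends to $\vvv{x'}/\eee{x'}^2 = \frac{9\pi^2}{64}-1$. Averaging $h$ over the classical pairwise-distance distribution of two uniform points in the circular region (mean $\frac{128}{45\pi}\sqrt{|A|/\pi}$) and substituting $d = \pi R^2/|A|$ yields $\eee{\nu} \sim \frac{32}{15\sqrt{d}}$. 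Part (v) is then essentially a corollary: (i) ensures every hop has a valid relay w.h.p., (iii) shows that routing terminates in a finite expected number of hops, and a Markov-inequality plus union-bound argument over source--destination pairs gives network connectivity w.h.p.

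The principal obstacle is the Markov approximation of part (ii): even under a Poisson node distribution, the path is not exactly Markov, because the uniform selection from $S_{k-1}$'s RSR introduces Palm-style conditioning that propagates through the overlap region. Making the heuristic ``the overlap dependence is negligible'' quantitative requires a careful estimate of how the conditional Poisson intensity on the overlap deviates from the unconditional one, and how that perturbs the transition kernel of the progress chain. Everything else---direct integration for per-step moments, renewal-theoretic translation to path-length statistics, and union-bound connectivity---is either standard or directly calibrated to the given scaling $\eta dN + \log d \to +\infty$.
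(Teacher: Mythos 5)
Your overall architecture matches the paper's: part (ii) as a quantitative Palm-conditioning estimate on the overlap of consecutive RSRs (this is exactly the paper's Proposition \ref{prop1}, which bounds the conditional expected fraction of points in $D_{n-1}D_n$ between $\left(1-\frac{2}{dN}-\alpha_1 e^{-\alpha_2 dN}\right)\frac{|D_{n-1}D_n|}{|D_n|}$ and $\frac{|D_{n-1}D_n|}{|D_n|}$), the same per-step moments $\ee{x'_n}=\frac{4R}{3\pi}$ and $\ee{(y'_n)^2}=\frac{R^2}{4}$, Wald-type arguments for the mean and variance, and connectivity as a corollary of finiteness of the stopping time. Your value $\frac{9\pi^2}{64}-1$ is the correct one (the $61$ in the theorem statement is a typo, as the body of the paper confirms). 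For part (i) you take a genuinely different route: an angular grid plus union bound, versus the paper's use of the exact distribution of the widest empty wedge among $i$ uniform directions ($U_i(\eta)\le i(1-\eta)^{i-1}$); both are viable for interior nodes.

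There are, however, two concrete gaps. First, in part (i) you ignore the edge effect: for a node within $R$ of the boundary, an $\eta$disk oriented along the boundary lies up to half outside the region, so its nonempty intersection with $A$ has area only about $\eta\pi R^2/2$ and the emptiness probability degrades to roughly $e^{-\eta dN/2}$, not $e^{-\eta dN}$. There are $\Theta(\sqrt{d}N)$ such nodes, and in the paper's final bound \eqref{eq:finalupperbound} it is precisely this edge term, $\frac{400\pi(\log dN)^2}{\sqrt{d}}e^{-\frac{\eta}{2}dN}$, that dominates and calibrates the threshold $\eta dN+\log d\to\infty$; the paper needs a separate argument (conditioning on the orientation angle $\theta$ and the distance $\delta R$ to the boundary) to show the half-area worst case is rare enough. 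Your union bound as stated would miss this and would not justify the claimed threshold. Second, in parts (iii)--(iv) you identify the per-hop radial progress with $x'_n$, but the actual decrement is $r_n-r_{n+1}=-g(r_n,x'_{n+1},y'_{n+1})$ with $g(r,x',y')=\sqrt{(r-x')^2+y'^2}-r$, which depends on $r_n$ and makes $\{Y_n\}$ non-identically distributed even under the Markov approximation; the increments only behave like $x'_n$ when $r_n\gg R$. The paper's proof hinges on the sandwich $-x'_n\le g(r,x'_n,y'_n)\le -x'_n+\frac{(y'_n)^2}{2(r-R)}$, a choice of intermediate radius $r$ balancing the two error terms for the mean, and a geometric partition of radii $a_i=R(h/R)^{i/k}$ to control $\vv{S_{\nu_R^{(h)}}\mid h}$ for the variance. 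Your ``renewal-theoretic estimates'' would need exactly this machinery to be made rigorous; as written, the step from asymptotically i.i.d.\ $x'_n$ to the stated asymptotics for $\ee{\nu\mid h}$ and $\vv{\nu\mid h}$ is asserted rather than proved.
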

\begin{proof}
Here we only sketch the outline of the proof and present the respective details in the following sections. In Section \ref{sec:connectivity}, we show that for random networks, choosing $R(N)$ such that $\eta dN + \log d \to +\infty$ as $N\to\infty$ guarantees the existence of at least one relaying node in the $\eta$disk of each network node pointing at any directions in which their targeted destinations may lie a.a.s..\footnote{A specific node might act as a relay for multiple source-destination pairs.} To this end, we first derive an upper bound on the probability $\sigma(N)$ that the $\eta$disk of some nodes in the network pointing at some directions is empty. Then we show that choosing $d(N)$ as mentioned before ensures the asymptotic convergence of $\sigma(N)$ to zero as $N \to\infty$. This ensures the existence of a relaying node in every direction of a particular transmitting node and ascertains the possibility of packet delivery to a particular destination from any direction a.a.s..

In Section \ref{sec:Path Length Statistics}, assuming $R(N)$ satisfies the above condition and $N$ is large enough such that there exists a relaying node in every direction of a particular transmitting node with high probability, we prove that the routing path progress conditioned on the previous two hops can be approximated with a Markov process. Using the Markovian approximation, we then derive the asymptotic expressions for the mean and variance of the routing path length generated by the random $\frac{1}{2}$disk routing scheme between a source-destination pair that is $h$-distance apart and show that they are asymptotic to $\frac{h}{\ee{x'}} = \frac{3\pi}{4}\frac{h}{R}$ and $\frac{\vv{x'}}{\ee{x'}^2} \ee{\nu}= \left(\frac{9\pi^2}{61}-1\right)\ee{\nu}$, respectively. Furthermore, we show that the length of the random $\frac{1}{2}$disk routing path connecting a source to its destination is finite asymptotically. This shows that starting from a source and following the random $\frac{1}{2}$disk routing scheme we can reach the destination in finitely many hops with high probability; hence the network is asymptotically \emph{connected} with the random $\frac{1}{2}$disk routing scheme.
\end{proof}

\section{Theorem \ref{thm1}.$i$ Proof: Uniform Relaying Capability}
\label{sec:connectivity}

In this section we derive the sufficient conditions on $R(N)$ that ensure, for any node in the network, its $\eta$disks pointing in any directions over which its targeted destinations may lie contain at least one potential relaying node. To this end, we first characterize the upper bound on the probability $\sigma(N)$ that, for some network nodes, there are certain directions at which their $\eta$disks are empty; we then choose $R$ such that this bound is vanishingly small. In this process, we can distinguish between two types of network nodes based on their distances to the edge of the network: Nodes that are farther than $R$ away from the edge of the network, which we call \emph{interior nodes}, and nodes that are closer than $R$ to the edge of the network, which we call \emph{edge nodes}. For the sake of definition, we assume $\sigma(N) = 0$ when $N=0$.

For interior nodes, it is clear that the node distribution in their $\eta$disks, pointing at any direction, is the same. Therefore, the existence probability of an empty $\eta$disk for an interior node is independent of its targeted destination direction. However, due to the proximity of edge nodes to the boundary of the network, the existence probability of an empty $\eta$disk for an edge node highly depends on its destination orientation. For example, the $\eta$disks that fall partly outside the network region are more likely to be empty than the ones that are fully contained in the network region. Hence, we derive the probabilities of a node having an empty $\eta$disk in some direction separately for the interior nodes and the edge nodes, denoted by $\sigma'(N)$ and $\sigma''(N)$, respectively.

Recall that a $\eta$disk is a wedge of angle $2\pi \eta$ and radius $R$, with $0 < \eta \leq 1$. Each $\eta$disk has an expected number of nodes $\eta dN$. As shown in Section \ref{subsec:calculationofsigma}, the existence probability of an empty $\eta$disk increases as $\eta$ decreases. However, we can show that the expected length of the routing path connecting a source to its destination will decrease as $\eta$ decreases. Hence, there exist a tradeoff between the existence probability of an empty $\eta$disk (i.e., a disconnected node) and the expected length of the routing path between a source-destination pair parameterized by $\eta$. We leave the study of this trade-off to future work and only derive (in Section \ref{sec:Path Length Statistics}) the mean and variance of the path length connecting a source-destination pair when $\eta = 1/2$.

\subsection{Calculation of $\sigma'(N)$}
\label{subsec:calculationofsigma1}

Consider an interior node $x$, fixed for now. Given $i \geq 1$ nodes are in the transmission range of $x$, their directions in reference to $x$ are independent and uniformly distributed on $[0,2\pi]$. The probability that $x$ has an empty $\eta$disk in some direction equals the probability $U_i(\eta)$ that the angle of the widest wedge containing none of these $i$ nodes is at least $2\eta\pi$. It is not difficult to give a simple upper bound on $U_i(\eta)$: Of the $i$ nodes, without loss of generality (W.L.O.G.), we can assume that (at least) one is at one edge of an empty wedge with angle of $2\eta \pi$, while the other $i-1$ are distributed independently and uniformly in the remainder of the full transmission disk, as shown in Fig. \ref{fig:emptyarc}. Hence, we obtain $U_i(\eta) \leq i(1-\eta)^{i-1}$, for $i\geq 1$. Of course, if $i=0$ the probability is $U_0(\eta) = 1$.
\begin{figure}[ht]
  \centering
  \includegraphics[width=2 in]{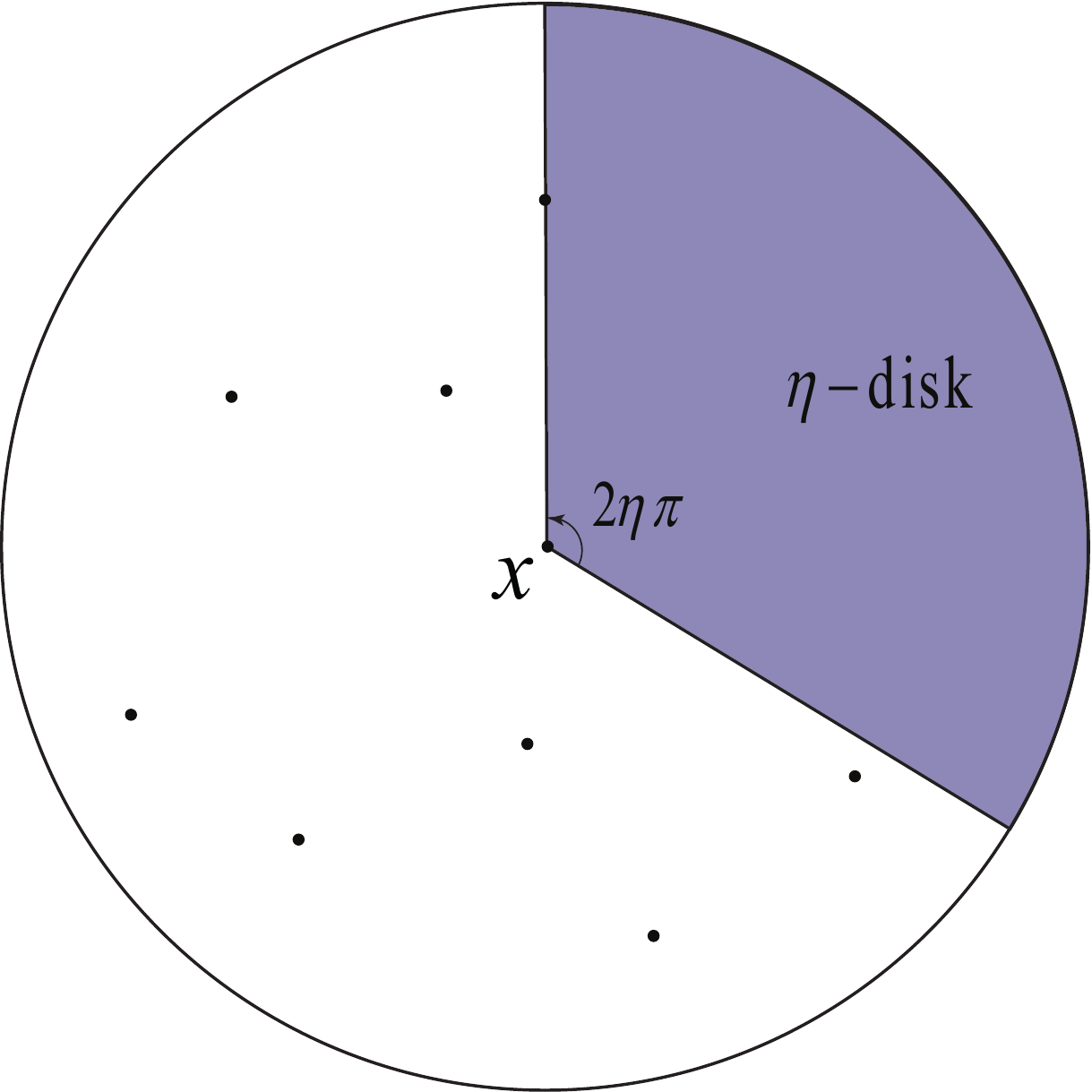}
  \caption{A realization for which the widest wedge between the nodes is of an angle at least $2\eta \pi$.}
  \label{fig:emptyarc}
\end{figure}

One can obtain a more precise expression for $U_i(\eta)$  using results in \cite{Mardia}, page 188:
\begin{equation*}
U_i(\eta) = \sum_{k = 1}^{\min\{\lfloor 1/\eta\rfloor, i\}} (-1)^{k-1}  \binom{i}{k} (1-k\eta)^{i-1} \leq i(1-\eta)^{i-1}\,,
\end{equation*}
for $i\geq 1$, where $\lfloor a \rfloor$ is the largest integer smaller than $a$. This expression is based on the inclusion-exclusion principle for the probability of the union of events, for which the first term in the sum provides an upper bound and the first two terms provide a lower bound.

Averaging over $i$ (number of the nodes in the transmission range of $x$) and over the number of network nodes, we have:
\begin{align}
\label{eq:interiorupperbound}
\sigma'(N) &\leq \sum_{k=1}^{\infty} e^{-N}\frac{N^k}{k!} \nonumber \\
&~ \cdot k \sum_{i=0}^{k-1} \binom{k-1}{i} d^i(1-d)^{k-1-i} U_i(\eta) \nonumber\\
&\leq dN^2e^{-\eta dN}\left(1+\frac{1}{dN}e^{-(1-\eta)dN}\right)\,.
\end{align}

\subsection{Calculation of $\sigma''(N)$}
\label{subsec:calculationofsigma2}

So far we have considered the interior nodes that are at least $R$-distance away from the boundary of the network region. Now, we consider edge nodes that are within $R$ of the network edge. Some $\eta$disks of an edge node may fall partially (up to half) outside the region, which increases the chance that they are empty. We refer to this phenomenon as the \emph{edge effect}. Since the network region is circular, the number of such edge nodes equals $(2-\sqrt{d})\sqrt{d}N$, which is of order $\Theta\left(\sqrt{d}N\right)$. We need to determine how their contribution to $\sigma(N)$ differs from the interior nodes.

Consider an edge node $e$, $(\delta' R)$-distance away from the network edge, with $0 < \delta' < 1$. As shown in Fig. \ref{fig:area}, we take node $e$ as the pole and the ray $eu$ (perpendicular to the network edge) as the polar axis of the \emph{local} (polar) coordinates at node $e$. We argued at the beginning of this section that, for edge node $e$, the probability of an $\eta$disk being empty, depends highly on its orientation. Let us consider this claim more closely. Let $\varphi := \cos^{-1}(\delta)$, as shown in Fig. \ref{fig:area}, where $\delta R$ is the distance between node $e$ and the line passing through the intersection points $B$ and $F$ in Fig. \ref{fig:area} with
$$\delta = \delta' - \frac{R}{L}\frac{1-\delta'^2}{2(1-\delta' \frac{R}{L})}\,,$$
and $L := \sqrt{|A|/\pi} = R/\sqrt{d}$ being the network region radius.
\begin{figure}
  \centering
  \includegraphics[width=2.8 in]{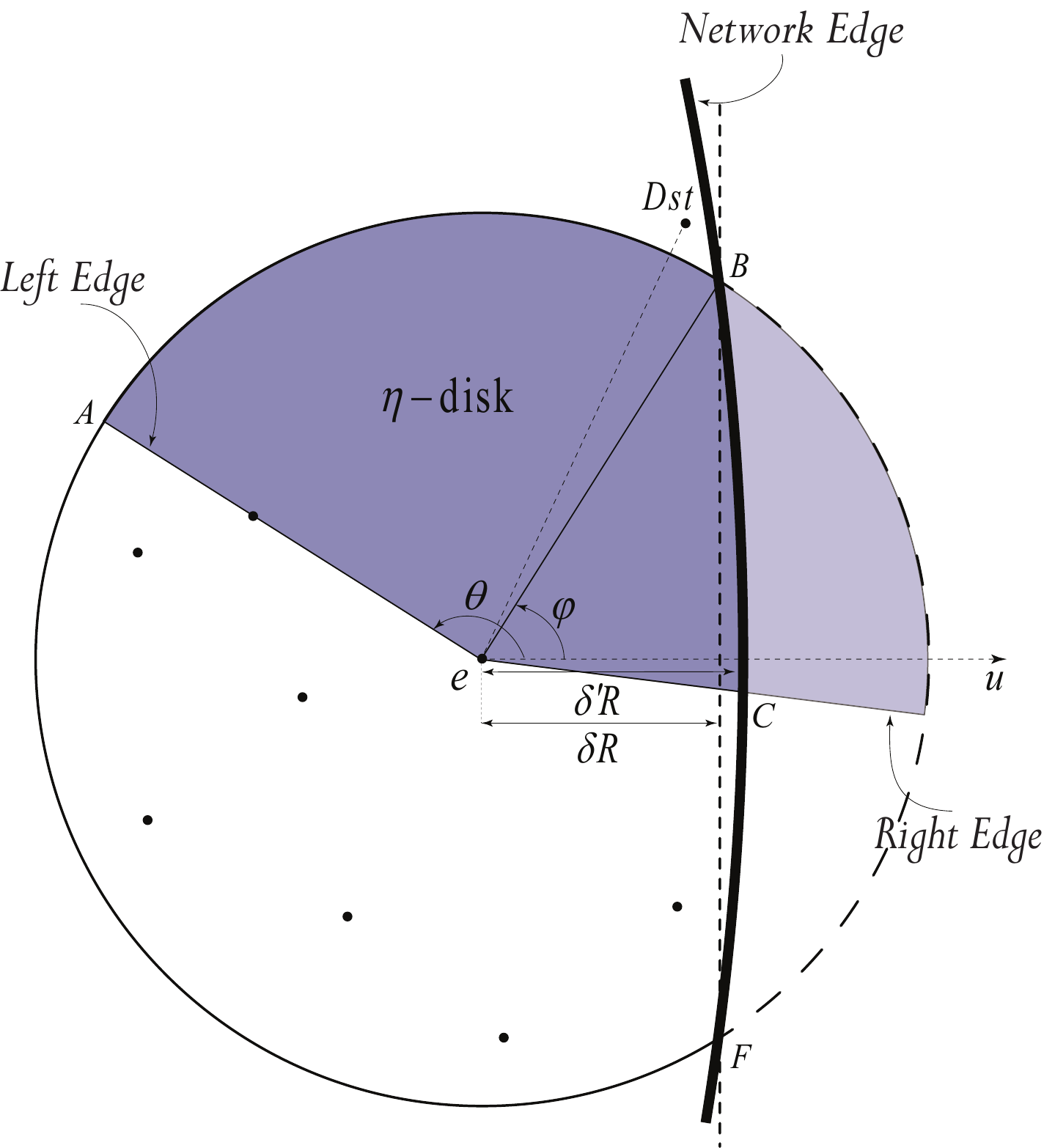}
  \caption{Intersection of the $\eta$disk with the network region.}
  \label{fig:area}
\end{figure}

Note that all the $\eta$disks are oriented towards the destination node. Hence, for all $\eta$disks that are oriented at an angle in the range $(-\varphi,\varphi)$, we must have that the destination is within node $e$'s transmission range. Therefore, we only need to be concerned with empty $\eta$disks oriented at an angle in the range $(\varphi,2\pi-\varphi)$. The $\eta$disks oriented at an angle in the range $(-\varphi-\eta \pi,-\varphi) \cup (\varphi,\varphi+\eta \pi)$ are partially outside the network region, as illustrated in Fig. \ref{fig:area}, and those oriented at any angle in $(\varphi+\eta \pi,2\pi-\varphi-\eta \pi)$ are fully contained inside the network region. Note that here, all the angles are measured relative to the polar axis $eu$. In both aforementioned cases, the area of the $\eta$disk inside the network region is at least $\eta \pi R^2/2$. Hence, we can compute a simple upper bound on $\sigma''(N)$ as follows. Let $a_2 := \pi(L^2 - (L-R)^2)/|A| = \sqrt{d}(2-\sqrt{d})$ and $a_1 := \pi(L^2-(L-2R)^2)/|A| = 4\sqrt{d}(1-\sqrt{d})$ be the normalized areas of the network edge region and the network extended edge region\footnote{The extended edge region is the area of the network that is within $2R$ of the network edge.} respectively. We have
\begin{align}
\label{eq:simplesigma''}
\sigma''(N) &\leq \sum_{l=1}^{\infty} e^{-a_1 N}\frac{(a_1 N)^l}{l!}\, l\, \frac{a_2}{a_1} \nonumber \\
&~ \cdot \sum_{i=0}^{l-1} {l-1\choose i} (\frac{d}{2a_1})^i(1-\frac{d}{2a_1})^{l-1-i} U_i(\frac{\eta}{2}) \nonumber\\
&\leq (2-\sqrt{d})\sqrt{d}Ne^{-\frac{1}{2}dN} + (1-\frac{\sqrt{d}}{2})d^{3/2}N^2e^{-\frac{\eta}{4}dN}\,.
\end{align}

A much tighter upper bound on $\sigma''(N)$ can be obtained as follows. First, suppose that there are \emph{no} nodes within the transmission range of node $e$; this event occurs with probability no greater than
\begin{align}
\label{eq:sigma''(no node)}
\sigma''(N) &\leq \sum_{l=1}^{\infty} e^{-a_1N}\frac{(a_1N)^l}{l!}\,l\,\frac{a_2}{a_1}(1-\frac{d}{2a_2})^{l-1} \nonumber\\
&\leq (2-\sqrt{d})\sqrt{d}Ne^{-\frac{1}{2}dN}\,.
\end{align}

Second, suppose that there are $i \geq 1$ nodes in the intersection of node $e$'s transmission range with the network region. If an empty $\eta$disk exists and it is completely contained within the network region, W.L.O.G., there should be a node on its left edge at some angle $\theta \in (\varphi+2\eta\pi,2\pi-\varphi)$. However, for an empty $\eta$disk that is partially contained within the network region there should be, again W.L.O.G., a node at an angle $\theta \in (\varphi+\eta\pi,\varphi+2\eta\pi)$ or $\theta \in (-\varphi,-\varphi+\eta\pi)$ on the left edge of the $\eta$disk (note that, as discussed earlier, no $\eta$disks can be oriented at an angle in $(-\varphi,\varphi)$). Clearly, the existence probability of such empty $\eta$disks (that is partially contained in the region $A$) increases as either $\delta$ or $|\theta|$ decreases. The area of the intersection between such an $\eta$disk (that is partially contained in the region $A$) and the network region $A$ is that of a wedge with angle $|\theta|-\varphi$ (wedge $AeB$ in Fig. \ref{fig:area}) plus at least a triangle abutting the right edge of the wedge (triangle $BeC$ in Fig. \ref{fig:area}). 
In fact for an arbitrary small $\epsilon$, if either $\delta \geq \sin(3\epsilon\pi)$ or $\theta \geq \varphi + \eta\pi + 2\epsilon\pi$, the area of the intersection between the $\eta$disk and the network region is at least $(\eta/2+\epsilon)\pi R^2$. Otherwise, it is at least $\eta\pi R^2/2$. Thus, averaging over $\delta$, $\theta$ and the number of edge nodes, the probability that some edge nodes have empty $\eta$disks in some directions, $\sigma''(N)$, is derived to be no more than
\begin{align}
\label{eq:detailedgeupperbound}
&\sum_{l=1}^{\infty} e^{-a_2N}\frac{(a_2N)^l}{l!}\,l\, \frac{a_2}{a_1}\sum_{i=1}^{l-1} {l-1\choose i} (\frac{d}{2a_1})^i(1-\frac{d}{2a_1})^{l-1-i} i \nonumber\\
&~ \cdot \bigg\{\p{\delta < \sin(3\pi\epsilon)} \p{\exists \textrm{ empty $\eta$disk} ~\Big|~ i, \delta < \sin(3\pi\epsilon)} \nonumber\\
&~ + \p{\delta > \sin(3\pi\epsilon)} \p{\exists \textrm{ empty $\eta$disk} ~\Big|~ i, \delta > \sin(3\pi\epsilon)} \bigg\} \nonumber\\
&\leq \frac{d^{3/2}N^2}{2-\sqrt{d}}\bigg\{12\pi\epsilon^2e^{-\frac{\eta dN}{1+8\epsilon}}+6\pi\epsilon e^{-\frac{(\eta+2\epsilon) dN}{1+8\epsilon}}+ 3\pi\epsilon e^{-\frac{2\eta dN}{1+8\epsilon}} \nonumber\\
&~ + 2e^{-\frac{(\eta+2\epsilon) dN}{2}} + e^{-\eta dN}\bigg\}\,,
\end{align}
for arbitrary $\epsilon \geq 0$. Choosing $\epsilon = \frac{2\log dN}{dN}$, together with \eqref{eq:sigma''(no node)}, yields a tighter upper bound for the probability that some edge nodes has an empty $\eta$disk oriented in some direction:
\begin{align}
\label{eq:edgeupperbound}
\sigma''(N) &\leq \frac{400\pi\left(\log dN\right)^2}{\sqrt{d}}e^{-\frac{\eta}{2}dN}\nonumber \\
&~+ \frac{16(dN)^2}{\sqrt{d}}e^{-\eta dN} + 4\sqrt{d}Ne^{-\frac{1}{2}dN}\,,
\end{align}
for large enough $dN$ where the last summand is the probability that some edge nodes have no other nodes within their transmission ranges, derived in \eqref{eq:sigma''(no node)}.

\subsection{Calculation of $\sigma(N)$}
\label{subsec:calculationofsigma}

Finally, summing \eqref{eq:interiorupperbound} and \eqref{eq:edgeupperbound}, we obtain the bound $\sigma(N)$ on the probability that some nodes in the network have empty $\eta$disks looking in some directions as:
\begin{align}
\label{eq:finalupperbound}
\sigma(N) &~\leq~ \frac{400\pi\left(\log dN\right)^2}{\sqrt{d}}e^{-\frac{\eta}{2}dN}+\frac{16(dN)^2}{\sqrt{d}}e^{-\eta dN} \nonumber \\
&~+ 4\sqrt{d}Ne^{-\frac{1}{2}dN} + 4dN^2e^{-\eta dN}\,.
\end{align}

This bound on $\sigma(N)$ is asymptotic to $\frac{400\pi\left(\log dN\right)^2}{\sqrt{d}}e^{-\frac{\eta}{2}dN}$, which goes to zero if $\eta dN + \log d \to \infty$ as $N\to\infty$. Hence, setting $d = \frac{c\log N}{N}$ with $c > 1/\eta$, we obtain that every node in the network have at least one relaying node in every direction over which their targeted destinations may lie with probability approaching one as $N \to \infty$, which shows the consistency between our result and the ones derived in \cite{GuptaKumar2}, \cite{conn} and \cite{yin} for $\eta = 1$.

\begin{remark}
\label{rem:d}
Setting $d = \frac{c\log N}{N}$ is equivalent to setting $R(\lambda,|A|) = \sqrt{\frac{c}{\pi}\frac{\log\lambda + \log |A|}{\lambda}}$ for $c > 1/\eta$. In particular, for the case of dense networks (i.e., $\lambda\to\infty$ with a finite $|A|$) and for the case of large-scale networks (i.e., $|A|\to\infty$ with a finite $\lambda$), setting $R(\lambda) = K \sqrt{\log\lambda/\lambda}$ and $R(|A|) = K \sqrt{\log |A|}$ respectively, with a large enough constant $K$, guarantees the existence of relaying nodes in a ``uniform'' manner around each node in the network.
\end{remark}

\section{Theorem \ref{thm1}.$ii$--$v$ Proof: Path Length Statistics and Connectivity}
\label{sec:Path Length Statistics}

Assume $R(N)$ is chosen such that $\eta dN + \log d \to +\infty$ as $N\to\infty$ and $N$ is large enough such that each node in the network has at least one relaying node in every direction with high probability. We now investigate the question of how long the path generated by the random $\eta$disk routing scheme is, where we focus on the $\eta = 1/2$ case in this paper. To answer this question, we need to characterize the process of path establishment from a given source to its destination by the random $\frac{1}{2}$disk routing scheme.

In the following, we ignore the edge effect for the sake of simplicity in mathematical characterization. In other words, we assume that the $\frac{1}{2}$disks of \emph{all} network nodes looking in any direction are completely contained in the network region. Later, we show (through simulation) in Section \ref{sec:Simulation Results} that the asymptotic results derived in this section still hold even when considering the routing next to the boundary for source-destination pairs that are located near the network boundary.

Now consider an arbitrary source-destination pair that is $h$-distance apart. We set the destination node at the origin and assume that the routing path starts from the source node at $X_0 = (-h,0)$, where $X_n$ is the Cartesian coordinate of the $n^{\textrm{th}}$ relay node along the routing path and $r_n:=\|X_n\|$ is the Euclidean distance of the $n^{\textrm{th}}$ relay node from the destination.

More specifically, the routing path starts at the source node $X_0 = (-h,0)$ with its $\frac{1}{2}$RSR $D_0$ that is a $\frac{1}{2}$disk with radius $R$ centered at $X_0$ and oriented towards the destination at $(0,0)$. The next relay $X_1$ is selected at random from those contained in $D_0$ (the rRSL rule). This induces a new $\frac{1}{2}$RSR $D_1$, also a $\frac{1}{2}$disk but centered at $X_1$ and oriented towards the destination. Relay $X_2$ is selected randomly among the nodes in $D_1$, and the process continues in the same manner until the destination is within the transmission range. Note that $D_n$ solely depends on $X_n$. We claim that the routing path has converged (or is established) whenever it enters the transmission/reception range of the final destination, i.e., $r_{\nu} \leq R$, for some $\nu \in \{1, 2, \cdots\}$. In Fig. \ref{fig:routingevolution}, we illustrate the progress of routing towards the destination.
\begin{figure}
  \centering
  \includegraphics[width=3 in]{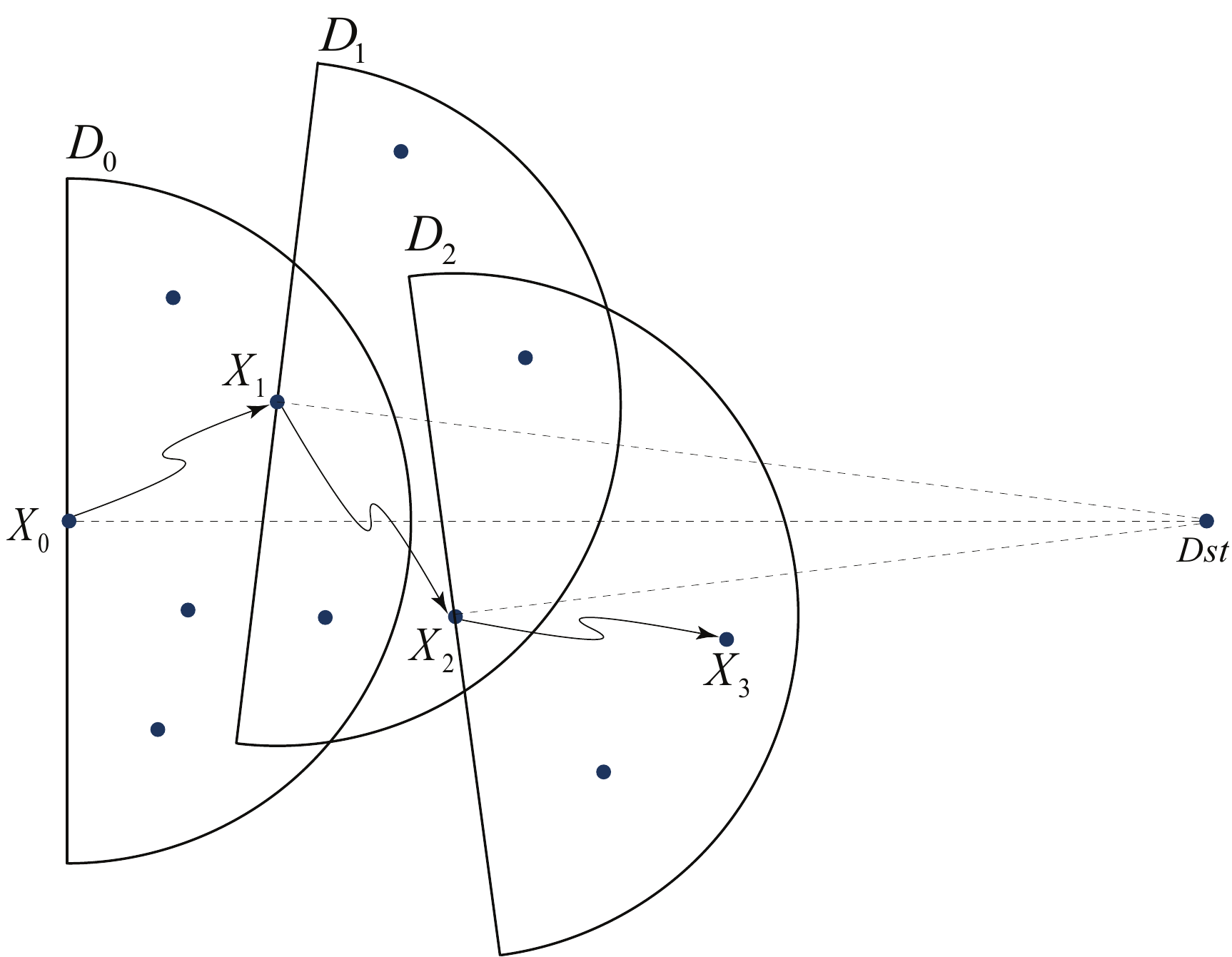}
  \caption{Evolution of the random $\frac{1}{2}$disk routing path.}
  \label{fig:routingevolution}
\end{figure}

Define $S_n := h - r_n$ and the routing increment as $Y_n := S_n - S_{n-1} = r_{n-1} - r_n$. Let $\phi(D_n)$ be the number of nodes in $D_n$. For the sake of definition, we set $Y_i = 0$ for $i>n$ if $\phi(D_{n-1}) = 0$. In the next subsection we investigate how similar $\{S_n\}$ and consequently $\{r_n\}$ are to a Markov process.\footnote{For an alternative treatment of the problem refer to \cite{similar1}, Section 4.1.}

\subsection{Theorem \ref{thm1}.$ii$ Proof: Markov Approximation}
\label{subsec:Markov Approximation}

In this subsection we investigate how close our Markov approximation model for $\{r_n\}$ is to the actual process of route establishment by the random $\frac{1}{2}$disk routing scheme. Observe that even though the underlying distribution of the network nodes is Poisson and the new relays are chosen uniformly at random within each $\frac{1}{2}$RSR, the increments $Y_1, Y_2, \ldots$ are neither independent nor identically distributed. This is due to the fact that the orientations of all $\frac{1}{2}$RSRs are pointing to a common node (destination) and might overlap, as shown in Fig. \ref{fig:routingevolution}.

More specifically, let $k_n$ be the number of previous relaying nodes whose RSRs intersect with $D_n$. Assuming $\phi(D_{n}) > 0$, $S_{n+1} = S_n + Y_{n+1}$ is a Markov process if the conditional distribution of $Y_{n+1}$ given $S_i$, $n-k_n \leq i\leq n$, only depends on $S_n$. Equivalently, $r_{n+1} = h - S_{n+1}$ is a Markov process if the conditional distribution of $X_{n+1}$ given $X_i$, $n-k_n \leq i\leq n$, only depends on $X_n$. However, the overlap of $D_n$ with $D_j$, $n-k_n\leq j < n$, correlates the spatial distribution of nodes in $D_n$ (and consequently $X_{n+1}$ and $Y_{n+1}$), not only with $X_n$, but also possibly with $X_j$, $n-k_n \leq j<n$.\footnote{This dependence increases as the packet gets closer to the destination due to the fact that the overlapping area between $D_n$ and $D_{n-1},$ $D_{n-2},$ $\ldots$ increases (stochastically) as the packet gets closer to the destination. In \cite{similar1} and its companion papers \cite{similar2}--\cite{similar4}, the authors looked at hop length distributions in ad hoc sensor networks with geometric routing schemes, and reported similar dependencies between hop increments $Y_1, Y_2, \ldots$.} In fact, given $X_i$, $n-k_n \leq i \leq n$, the nodes contained in $D_n$ are no longer uniformly distributed over $D_n$ as one would expect for a Poisson distributed network due to the overlap of $D_n$ with $D_j$, $n-k_n\leq j<n$ (cf. Proposition \ref{prop1}). As such, the process of path establishment by the random $\frac{1}{2}$disk routing scheme, $\{r_n\}$, is \emph{not} a Markov process. What is less clear, however, is how close $\{r_n\}$ is to a Markov process.

Tracking the dependence of $X_{n+1}$ on all $X_j$, $n-k_n \leq j \leq n$, is extremely tedious. As such, in this work we only show how close the routing path progress conditioned on the previous two hops is to a Markov process, i.e., we show in Proposition \ref{prop1} that the conditional distribution of $X_{n+1}$ given $(X_n,X_{n-1})$ is close to that of $X_{n+1}$ given $X_n$ for large $N$. We show that the error resulted from considering only $X_n$ and neglecting the effect of $X_{n-1}$ on the distribution of $X_{n+1}$ is at most $1/(dN)$, which goes to zero as $N\to\infty$.\footnote{Note that by Theorem \ref{thm1}, $R$ is chosen such that $\eta dN+\log d \to +\infty$ as $N\to\infty$, which implies that $dN\to \infty$ and $d\to 0$ for smallest transmission radius \cite{GuptaKumar}.}

Note that, by a method similar to the proof of Proposition \ref{prop1}, we might show that the incurred errors in modeling $\{r_n\}$ due to higher-order dependencies should be at most $k_n/(dN)$, which is relatively negligible \emph{if} $k_n = o\left(\sqrt{dN}\right)$ for large $N$. Simulations indicate that $k_n$ should in fact remain in the order of $o\left(\sqrt{dN}\right)$; however, we could not establish an explicit proof for this claim, which will be left for our future study.

We emphasize that, in what follows, conditioning on $\phi(D_n) > 0$ means we only know that there is at least one node in $D_n$; however, conditioning on $\phi(D_n)$ means we know the exact number of nodes in $D_n$. Furthermore, Let $C^c:=A - C$ denote the complement of $C$ with respect to network region $A$ and $\I{\cdot}$ represent the indicator function, i.e., $\I{\cdot} = 1$ if the event in the subscript happens and $\I{\cdot} = 0$ otherwise.

Now we investigate how similar the distribution of $X_{n+1}$ over $D_n$ is to a uniform distribution given $(X_n,X_{n-1})$. Note that given only $X_n$, $X_{n+1}$ is uniformly distributed over $D_n$. Given $X_n$, $X_{n-1}$, $\phi(D_{n-1})$, and $\phi(D_n) > 0$, the number of nodes in $D_{n-1}D_n := D_{n-1} \cap D_{n}$ is $\phi(D_{n-1}D_n) \sim \textrm{Binomial}\left(\phi(D_{n-1})-1,\frac{|D_{n-1}D_n|}{|D_{n-1}|}\right) + \I{X_{n-1} \in D_n}$ and is independent of the number of nodes in $D^c_{n-1}D_n$, which is $\phi(D^c_{n-1}D_n) \sim \textrm{Poisson}(\lambda |D^c_{n-1}D_n|)$. Moreover, conditioned additionally on the two random variables $\phi(D_{n-1}D_n)$ and $\phi(D^c_{n-1}D_n)$, each collection of nodes (located in $D_{n-1}D_n$ and $D^c_{n-1}D_n$) is uniformly distributed over the respective areas. This does not, however, imply that the combined collection of nodes is uniformly distributed over $D_n$ as shown in the following proposition. The combined points are uniformly distributed over $D_n$ \emph{only if} the (conditional) expected proportion of points in $D_{n-1}D_{n}$ is $\eee{\frac{\phi(D_{n-1}D_n )}{\phi(D_n)} \mid \phi(D_n)>0, \phi(D_{n-1})>0, X_n, X_{n-1}} = \frac{|D_{n-1}D_{n}|}{|D_{n}|}$.


\begin{prop}
\label{prop1}
Assume the locations of current and previous relay nodes $(X_n,X_{n-1})$ are given and $\phi(D_{n-1})>0$. Given $\phi(D_n)>0$, the distribution of the nodes located inside $D_n$ converges to a uniform distribution over $D_n$ as $N \to \infty$. In particular, the conditional probability of selecting the next node $X_{n+1}$ from $D_{n-1}D_n$, i.e., $\rho(X_{n-1},X_n) := \eee{\
\frac{\phi(D_{n-1}D_n)}{\phi(D_n)}~\Big|~ \phi(D_n)>0, \phi(D_{n-1})>0, X_n, X_{n-1}}$ satisfies
\begin{align}
\label{eq:prop1}
\left(1-\frac{2}{dN} - \alpha_1(n) e^{-\alpha_2(n) dN}\right)\frac{|D_{n-1}D_n|}{|D_n|} &  \nonumber \\
< \rho(X_{n-1},X_n) <& \frac{|D_{n-1}D_n|}{|D_n|}\,,&
\end{align}
where $\alpha_1(n) > 2$ and $0 < \alpha_2(n) < 1$ are independent of $N$.
\end{prop}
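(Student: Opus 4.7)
The plan is to use the conditional decomposition stated immediately before the proposition to express $\rho$ as an integral ratio and then bound that ratio. Let $\mu := \lambda|D_{n-1}| = dN/2$, $\alpha := \lambda|D_{n-1}D_n|$, and $\beta := \lambda|D_{n-1}^c D_n|$, so $\alpha + \beta = \mu$ (since $|D_{n-1}| = |D_n|$) and $p := |D_{n-1}D_n|/|D_n| = \alpha/\mu$. By the hypothesis, conditional on $(X_{n-1}, X_n)$ and $M := \phi(D_{n-1}) \geq 1$, the variables $N_1 := \phi(D_{n-1}D_n)$ and $N_2 := \phi(D^c_{n-1}D_n)$ are conditionally independent given $M$, with $N_1 \mid M \sim \textrm{Binomial}(M-1, p) + \I{X_{n-1} \in D_n}$, $N_2 \sim \textrm{Poisson}(\beta)$, and $M \mid M \geq 1$ truncated Poisson$(\mu)$. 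Set $T := N_1 + N_2$; all expectations and probabilities below are implicitly conditional on $(X_{n-1}, X_n)$ and $M \geq 1$.

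First I would compute the joint PMF $\pp{N_1 = k, N_2 = \ell}$ by summing over $m = \phi(D_{n-1}) \geq 1$ and using the identity $\frac{1}{j+k+1} = \int_0^1 u^{j+k}\,\D u$ to collapse the truncated-Poisson factor. Interchanging sum and integral, then summing over $k + \ell = t$, gives (handling the $\I{X_{n-1} \in D_n}$ summand as a separate additive correction)
\begin{equation*}
\pp{T = t} = \frac{\mu\, e^{-\mu-\beta}}{(1-e^{-\mu})\,t!}\int_0^1 e^{(\mu-\alpha)u}(\alpha u + \beta)^t\,\D u.
\end{equation*}
A parallel computation of $\eee{N_1\,\I{T=t}}$, combined with the algebraic identity $\int f g^t\,\D u = \alpha\int u f g^{t-1}\,\D u + \beta\int f g^{t-1}\,\D u$ (with $f(u) := e^{(\mu-\alpha)u}$ and $g(u) := \alpha u + \beta$), yields the key representation
\begin{equation*}
\frac{\eee{N_1 \mid T = t}}{t} = 1 - \beta\cdot\frac{\int_0^1 f(u)\, g(u)^{t-1}\,\D u}{\int_0^1 f(u)\, g(u)^t\,\D u} =: 1 - \beta\, \textrm{E}_t[1/g],
\end{equation*}
where $\textrm{E}_t[\cdot]$ is expectation under the probability measure on $[0,1]$ with density proportional to $f(u)\, g(u)^t$.

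The upper bound is then immediate. Because $g(u) \leq g(1) = \mu$ on $[0, 1]$, we have $1/g(u) \geq 1/\mu$ pointwise, so $\textrm{E}_t[1/g] \geq 1/\mu$ and hence $\eee{N_1 \mid T = t}/t \leq 1 - \beta/\mu = p$ for every $t \geq 1$. Averaging over $T \mid T > 0$ preserves the inequality, and because the weighting measure has positive mass on $\{u < 1\}$ (where $1/g > 1/\mu$), the inequality is strict, giving $\rho < p$.

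For the lower bound I would upper-bound $\textrm{E}_t[1/g]$. The substitution $s := 1 - u$ transforms the weighting measure into density $\propto e^{-\beta s}(1 - p s)^t$ on $[0, 1]$, which concentrates near $s = 0$ with effective scale $1/(pt + \beta)$; expanding $1/g(u) = (1/\mu)(1 - p s)^{-1}$ about $s = 0$ then gives $\textrm{E}_t[1/g] - 1/\mu \leq p/(\mu(pt + \beta))$, whence $p - \eee{N_1 \mid T = t}/t \leq p(1-p)/(pt + \beta)$. Averaging against $T \mid T \geq 1$ and using $\eee{1/(pT + \beta) \mid T \geq 1} = O(1/\mu) = O(2/(dN))$ (which follows since $T$ has conditional mean $\Theta(\mu)$) yields the leading bound $p - \rho \leq p(1-p) \cdot 2/(dN) \leq p \cdot 2/(dN)$. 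The exponentially small correction $\alpha_1(n)\, e^{-\alpha_2(n) dN}$ absorbs (i) the rare-$T$ atoms of probability $O(e^{-\mu})$ where the expansion is crude, (ii) the truncation factor $1/(1 - e^{-\mu})$ contributing $O(e^{-\mu})$, and (iii) the additive effect of the $\I{X_{n-1} \in D_n}$ indicator, which shifts $N_1$ and $T$ by at most one and so perturbs $\rho$ by $O(1/(dN))$. The main obstacle will be making the small-$s$ expansion of $\textrm{E}_t[1/g]$ rigorous uniformly in $t$, particularly for small $t$ where the weighting measure is poorly concentrated; handling those boundary regimes is exactly what produces the exponentially small correction.
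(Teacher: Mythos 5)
Your setup is essentially the paper's own (Appendix~\ref{appendix:markov}): the same Binomial-plus-independent-Poisson decomposition after deleting the uniformly selected point, the same device $\frac{1}{n+1}=\int_0^1u^n\,\D{u}$ to collapse the size-biasing factor coming from the truncated Poisson average into an integral, and the same pointwise comparison $g(u)\le g(1)=\mu$ on the resulting weighted integral for the strict upper bound $\rho<|D_{n-1}D_n|/|D_n|$. That half of your argument is complete and correct. Your reorganization via conditioning on $T=t$ and the identity $\eee{N_1\mid T=t}/t=1-\beta\,\textrm{E}_t[1/g]$ is a clean repackaging of the paper's double integral \eqref{eq:inter_a}, where the factor $n/(n+m)$ is instead absorbed by a second integration variable $w$; the two are interchangeable for the upper bound.

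The genuine gap is the one you flag yourself, and it is worth being precise about why it cannot be patched by ``absorbing boundary regimes into an exponentially small correction.'' The inequality $\textrm{E}_t[1/g]-1/\mu\le p/(\mu(pt+\beta))$ is \emph{exactly an equality} when $\beta=0$ (there $f\equiv1$, $g=\mu u$, and $\textrm{E}_t[1/g]=(t+1)/(t\mu)$), and is tight to leading order in $1/\mu$ for every fixed $t$; there is no slack for a heuristic small-$s$ expansion with an error term, so it must be proved as an exact inequality uniformly in $t\ge1$, and that is the entire analytic content of the lower bound. (A tempting shortcut --- replacing the density $\propto fg^t$ by $\propto g^t$ via monotone likelihood ratio --- yields only $\textrm{E}_t[1/g]\le(t+1)/(t\mu)$ and hence $p-\rho\lesssim(1-p)\cdot\frac{2}{dN}$ rather than $p\cdot\frac{2}{dN}$, which is too weak precisely when the overlap $|D_{n-1}D_n|$ is small.) Note, conversely, that if your per-$t$ inequality does hold, your averaging step is immediate and needs no concentration argument, since $pt+\beta\ge\beta=(1-p)\mu$ already gives $p-\rho\le p/\mu=2p/(dN)$. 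The paper avoids per-$t$ uniformity altogether: it bounds the double integral globally with elementary estimates such as $1-\frac1a\le ae^{-a}\int_0^1ye^{ay}\,\D{y}\le1$, which is exactly where the $-\frac{1}{\lambda|D_{n-1}|}=-\frac{2}{dN}$ term and the correction $\alpha_1(n)e^{-\alpha_2(n)dN}$ in \eqref{eq:prop1} come from. One further caution: the $\I{X_{n-1}\in D_n}$ contribution shifts $N_1$ and $T$ each by one and perturbs $\rho$ by $\Theta(1/(dN))$ --- the same order as the leading error term --- so it cannot be filed under the exponentially small correction as you propose; the paper's appendix simply omits this term from its computation.
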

\begin{proof}
Refer to Appendix \ref{appendix:markov}.
\end{proof}

Observe that according to \eqref{eq:prop1}, given the locations of two previous relay nodes $(X_{n-1},X_n)$, it is less likely that the next relay $X_{n+1}$ is selected from $D_{n-1}D_n$ as opposed to the case where the nodes were actually uniformly distributed over $D_n$. Hence, $X_{n+1}$ is \emph{not} uniformly distributed over $D_n$ given $(X_{n-1},X_n)$. However, we have $\rho(X_{n-1},X_n) \to \rho(X_n) = |D_n D_{n-1}|/|D_n|$ as $N\to\infty$. Hence, the routing path progress given the second-order history of the routing path converges asymptotically to a Markov process. Nevertheless, the routing increments $Y_1, Y_2, \ldots$ are not identically distributed and as shown in the next subsection, $Y_{n+1}$ is in fact a function of $r_n$. As such, in the following, we proceed \emph{as if} the process that governs the path establishment by the random $\frac{1}{2}$disk routing scheme is a \emph{non-homogeneous Markov process} for large $N$.

\subsection{Theorem \ref{thm1}.$iii$ and $v$ Proof: Expected Length of the Random $\frac{1}{2}$disk Routing Path and network Connectivity}
\label{subsec:Expected Length}

Using the Markovian approximation model for the routing path evolution $\{r_n\}$, we now derive the asymptotic statistics for the length of the random $\frac{1}{2}$disk routing paths. Let $X_n$ be the $n^{\textrm{th}}$ hop of the routing path and $(x'_{n+1},y'_{n+1})$ be the projection of $X_{n+1}-X_n$ onto the \emph{local} Cartesian coordinates with node $X_n$ as the origin and the $x$-axis pointing from $X_n$ to the destination node as shown in Fig. \ref{fig:routingschemebound}. Hence,
\begin{equation}
\label{eq:alternativeformulation}
r_{n+1} = \sqrt{(r_n-x'_{n+1})^2 + y'^{2}_{n+1}}\,,
\end{equation}
characterizes the distance evolution of the routing path at the $n^{\textrm{th}}$ hop. Based on the Markov approximation model, $X_{n+1}$ is uniformly distributed over $D_n$; hence $\{(x'_{n},y'_{n})\}$ is an i.i.d. sequence of random variables with ranges $x'_{n} \in [0,R]$ and $y'_{n} \in [-R, R]$ for all $n$.

\begin{figure}
  \centering
  \includegraphics[width=3 in]{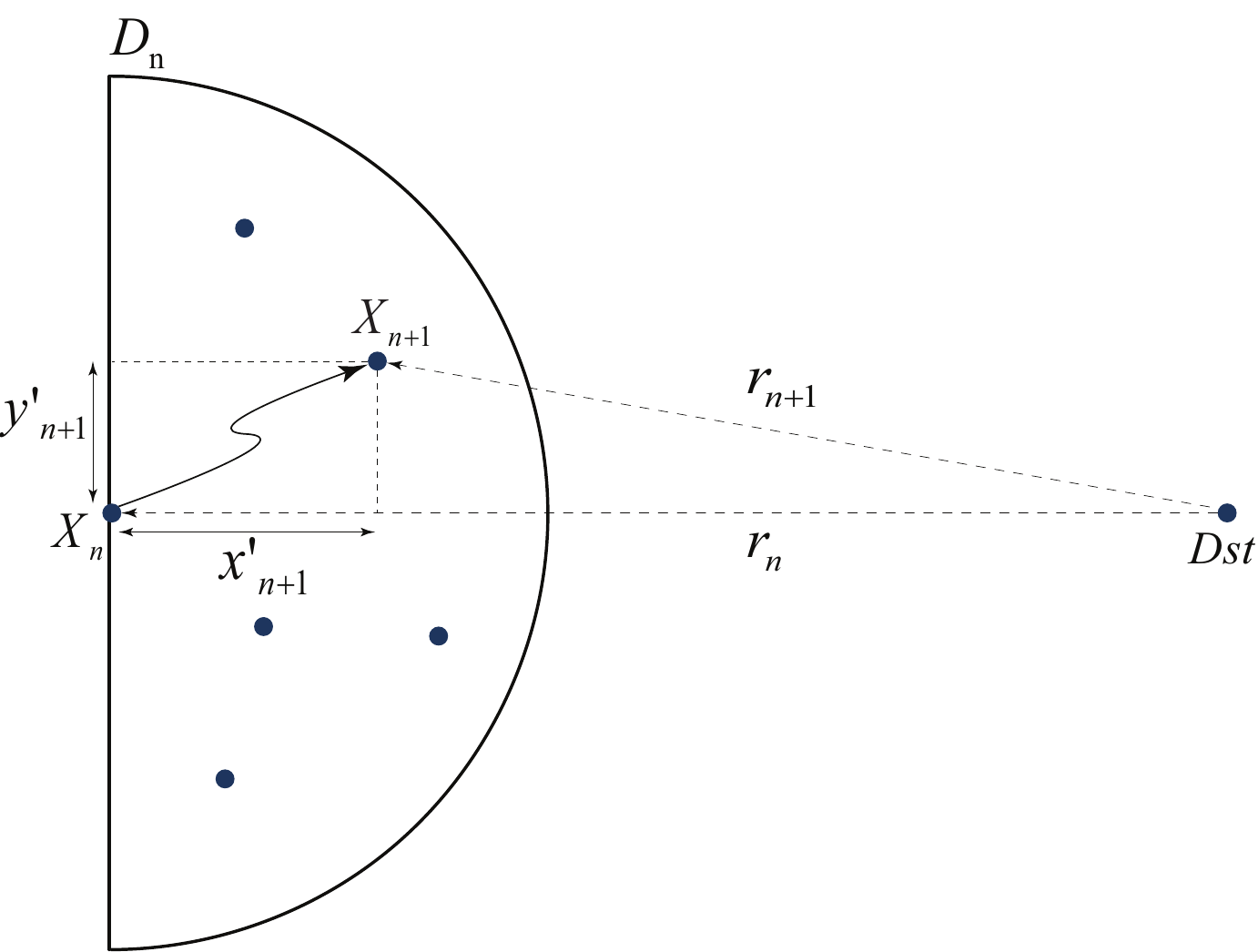}
  \caption{Distance between the next relay and the current node projected onto to the local coordinates at the current node.}
  \label{fig:routingschemebound}
\end{figure}

Define $\nu^{(h)}_{r} := \inf\{n:r_n \leq r,\, r_0 = h\}$, $r\geq R$, to be the index of the first relay node (along the routing path) that gets closer than $r$ to the destination when the source and destination nodes are $h$-distance apart. Hence, $\nu^{(h)}_{R}$ represents the first time the routing path enters the reception range of the destination and $\nu^{(h)}_{R}+1$ quantifies the length of the routing path, where $\nu^{(h)}_{R}\sim\nu^{(h)}_{R}+1$. Recall that in this paper we define the length of a routing path as the number of hops traversed over the routing path. It is easy to show that $\nu^{(h)}_{r}$ is a stopping time \cite{Grimmett} and
\begin{equation}
r-R \le r_{\nu^{(h)}_r}\le r\,. \nonumber
\end{equation}

Furthermore, let $g(r,x',y'):=\sqrt{(r-x')^2+y'^2}-r$. Observe that $g$ is a nonincreasing function over $r > R$, for fixed $(x',y')$, and $g(r_n,x'_{n+1},y'_{n+1}) = -Y_{n+1}$. Thus, for $n<\nu^{(h)}_r$, we have $r_n > r$ and
\begin{equation*}
-x'_{n+1} \le r_{n+1}-r_n = g(r_n,x'_{n+1},y'_{n+1})\le g(r,x'_{n+1},y'_{n+1})\,.
\end{equation*}

Hence, for a source-destination pair that is $h$-distance apart ($r_0 = h$), we have
\begin{subequations}
\begin{align}
r-R &\le r_{\nu^{(h)}_r}\le h+\sum_{n=1}^{\nu^{(h)}_r}g(r,x'_n,y'_n)\,, \label{eq:ba} \\
h+\sum_{n=1}^{\nu^{(h)}_r}(-x'_n) &\le r_{\nu^{(h)}_r}\le r\,.  \label{eq:bb}
\end{align}
\end{subequations}

Note, as well, that (refer to Appendix \ref{appendix:expecting})
\begin{align}
\label{eq:expect}
-\frac{4R}{3\pi}=\ee{-x'_n} \le &\ee{g(r,x'_n,y'_n)} \le \nonumber \\
&\ee{g(R,x'_n,y'_n)}<-\frac{R}4<0\,.
\end{align}

Now, applying Wald's equality \cite{Resnik} to \eqref{eq:ba} and \eqref{eq:bb} and rearranging, we obtain a bound on the expected value of the stopping time $\nu^{(h)}_{r}$:
\begin{align}
\label{eq:genralbound}
\frac{3\pi(h-r)}{4R}&\le \ee{\nu^{(h)}_r\mid h} \le\frac{h-r+R}{-\ee{g(r,x'_n,y'_n)}} \nonumber \\
&\le\frac{h}{-\ee{g(r,x'_n,y'_n)}}\le\frac{4h}R\,.
\end{align}
Substituting $r$ with $R$ we obtain a general bound for the expected length of routing path (minus one) between a source-destination pair that is $h$ distance apart as
\begin{equation*}
\frac{3\pi}{4}\left(\frac{h}{R}-1\right)\le \ee{\nu^{(h)}_R\mid h} \le\frac{4h}R\,.
\end{equation*}
This implies that the length of the random $\frac{1}{2}$disk routing path is almost surely (a.s.) finite when each network node has at least one node in its $\frac{1}{2}$RSR looking in any direction, which happens with probability no less than $1-\sigma(N)$ as obtained in \eqref{eq:finalupperbound}. In other words, when $dN/2 + \log d \to\infty$ as $N\to\infty$, we obtain that $\pp{\nu^{(h)}_{R} < \infty} \to 1$ as $N\to\infty$. This in turn shows that given $dN/2 + \log d \to\infty$ as $N\to\infty$, every path starting from any source will reach its destination in finitely many hops a.a.s., which proves that the network is connected employing the random $\frac{1}{2}$disk routing scheme, according to the connectivity definition in Section \ref{sec:SystemModel}.

When the ratio $h/R$ (i.e., the ratio between the source-destination distance and the transmission range) is large, we can obtain a tighter bound on the expected length of the routing path between a source-destination pair with $h$ separation. For the following, we assume $h \geq 2R$. Since $r_{\nu^{(h)}_r}\leq r$, we must have $\eee{\nu^{(h)}_R\mid h}\le \eee{\nu^{(h)}_r\mid h}+\eee{\nu^{(r)}_R\mid r}$. Thus, by \eqref{eq:genralbound} and proper substitutions, we have
\begin{equation*}
\frac{3\pi}{4}\left(1 - \frac{R}{h}\right)\le \frac{\ee{\nu^{(h)}_R\mid h}}{h/R} \le \frac{R}{-\ee{g(r,x'_n,y'_n)}}+\frac{4r}{h}\,,
\end{equation*}
for all $R \leq r \leq h$. Using
\begin{equation}
\label{eq:bound on g}
-x'_n \leq g(r,x'_n,y'_n) \leq - x'_n + \frac{(y'_n)^2}{2(r-R)} \,,
\end{equation}
and \eqref{eq:second moment} we get $\eee{g(r,x'_n,y'_n)} \leq -\frac{4R}{3\pi} + \frac{R^2}{8(r-R)}$. Choosing $r$ such that $\frac{8(r-R)}{R} = \frac{3\pi}{4}(\sqrt{\frac{h}{2R}}+1)$ (we may do so using the intermediate value theorem and the fact that $R \leq r \leq h$ and $h \geq 2R$), we may determine that
\begin{align}
\label{eq:E(nu)-bound}
\frac{3\pi}{4}\left(1 - \frac{R}{h}\right)&\le \frac{\ee{\nu^{(h)}_R\mid h}}{h/R} \le \frac{3\pi}{4}\frac{1}{1-\left(\sqrt{\frac{h}{2R}}+1\right)^{-1}} \nonumber \\
&~+ \frac{4R}{h}\left(\frac{3\pi}{32}\left(\sqrt{\frac{h}{2R}}+1\right)+1\right) \nonumber\\
&= \frac{3\pi}{4} \left(1+ \frac{5}{2}\sqrt{\frac{R}{2h}}+\frac{R}{2h}\right)+\frac{4R}{h}\,.
\end{align}
This implies
\begin{equation}
\label{eq:expected length_1}
\frac{R}{h}\ee{\nu^{(h)}_R \mid h} \to \frac{R}{\ee{x'_n}} = \frac{3\pi}{4}\,,
\end{equation}
or
\begin{equation}
\label{eq:expected length_2}
\ee{\nu^{(h)}_R \mid h} \sim \frac{h}{\ee{x'_n}} = \frac{3\pi}{4}\frac{h}{R}\,,
\end{equation}
as $\frac{h}{R} \to \infty$ given that $r_0 = h$.

\begin{remark}
\label{rem:h}
Recall that $L = \sqrt{|A|/\pi}$ and observe that $\p{h \leq \alpha} \leq \frac{\pi \alpha^2}{|A|}$. Therefore, we can obtain that $\p{h \leq \alpha(N)} \to 0$ for $\alpha(N) = o\left(L\right)$ as $N\to\infty$, which in return implies that $\p{h/R\to\infty \mid \eta dN+\log d \to\infty \textrm{ as } N\to\infty} = 1$. Hence, assuming that the conditions in Theorem \ref{thm1}.i hold, we have $h/R\to\infty$ a.s. as $N\to\infty$.
\end{remark}

\begin{remark}
\label{rem:asymtotic}
The asymptotic expected length of the routing path established by the random $\frac{1}{2}$disk routing scheme is $\frac{3\pi}{4} = R/\ee{x'_n} \approx 2.36$ times greater compared to the length of the routing path generated by the ideal direct-line routing scheme; in the ideal direct-line routing scheme we assume that there are relays located on the line connecting the source and destination with the maximal separation $R$.
\end{remark}

By averaging over all the possible source-destination pair distances $h$, we can determine the expected length of a typical random $\frac{1}{2}$disk routing path. Again using $\p{h\leq \alpha R} \leq \frac{\pi}{|A|}(\alpha R)^2$ and \eqref{eq:E(nu)-bound} we have that
\begin{align*}
\ee{\nu_R} &= \ee{\ee{\nu^{(h)}_R\mid h}\I{h\leq \alpha R} + \ee{\nu^{(h)}_R\mid h}\I{h> \alpha R}} \\
&\leq \frac{\pi \alpha^3R^2}{|A|}\left[\frac{3\pi}{4}\left(1+\frac{5}{\sqrt{8\alpha}}+\frac{1}{2\alpha}\right)+\frac{4}{\alpha}\right]\\
&~+ \frac{3\pi}{4}\frac{\ee{h\I{h > \alpha R}}}{R}\left(1+\frac{5}{\sqrt{8\alpha}}+\frac{1}{2\alpha}\right)+4\,,
\end{align*}
and
\begin{align*}
\ee{\nu_R} = \ee{\ee{\nu^{(h)}_R\mid h}} \geq \frac{3\pi}{4}\left(\frac{\ee{h}}{R}-1\right)\,.
\end{align*}

The problem of quantifying $\ee{h}$ is well studied in the literature \cite{averagedistance}, with the following known results for two network regions specifically: If the region is a planar disc with diameter $2L$, we have $\ee{h} = 128L/(45\pi) \approx 0.9054L$; and if it is a square with side length $L$, we have $\ee{h} = \left(2+\sqrt{2}+5\log(\sqrt{2}+1)\right)L/15 \approx 0.5214L$. Choosing $\alpha = o\left(d^{-1/6}\right)$ and recalling Remark \ref{rem:h}, we observe that $\ee{h\I{h > \alpha R}}\to \ee{h}$ as $N\to\infty$; hence, we have
\begin{align}
\label{eq:typical expected length}
\ee{\nu_R} \sim \frac{32}{15}\frac{1}{\sqrt{d}}\,,
\end{align}
as $N\to\infty$.

\subsection{Theorem \ref{thm1}.$iv$ Proof: Variance of the Random $\frac{1}{2}$disk Routing Path Length}
\label{subsec:Length Variance}

So far we have characterized the expected length of the routing paths generated by the random $\frac{1}{2}$disk routing scheme. However, the expected value alone is not descriptive enough regarding the \emph{individual realizations} of the routing path length. We need to determine how much the individual realization deviates from the expected value. Therefore, in this section, we consider the variance of the path lengths generated by the random $\frac{1}{2}$disk routing scheme. We first show that the variance is finite almost surely and then we show that asymptotically it grows linearly with the expected path length:

\begin{align}
\label{eq:bound on Var(R)}
\frac{\vv{\nu^{(h)}_R\mid h}}{\ee{\nu^{(h)}_R\mid h}} \to \frac{\vv{x'_n}}{\left(\ee{x'_n}\right)^2} = \frac{9\pi^2}{64}-1,
\end{align}
as $N\to\infty$. We will frequently use the following well known inequalities
\begin{equation*}
\left|\sqrt{\ee{X^2}}-\sqrt{\ee{Y^2}}\right| \leq \sqrt{\ee{(X-Y)^2}}\,,
\end{equation*}
and
\begin{equation*}
\left|\sqrt{\vv{X}}-\sqrt{\vv{Y}}\right| \leq \sqrt{\vv{X-Y}}\,.
\end{equation*}

Consider the i.i.d. sequence $\{(x'_n,y'_n)\}$ as defined in Section \ref{subsec:Expected Length}, and define the generalized stopping time $\nu^{(b)}_a$ to be $\nu^{(b)}_{a} := \inf\{n:r_n \leq a,\, r_0 = b\}$ for $R \leq a < b \leq h$. Observe that $\{\nu^{(b)}_a \geq N\}$ and $\{x'_n\}_{n < N}$ are independent, and $\eee{\nu^{(b)}_{a}}<\infty$ and $\ee{(x'_n)^2}<\infty$ as shown in Section \ref{subsec:Expected Length} and Appendix \ref{appendix:expecting}.

Note first that, by definition,
\begin{equation*}
\ee{\sum_{i = 1}^{\nu^{(h)}_R \wedge n} (-g(r_{i-1},x'_i,y'_i))} = \ee{r_0 - r_{\nu^{(h)}_R \wedge n}} \leq h\,,
\end{equation*}
for any $n$, where $\nu^{(h)}_R \wedge n := \min\{\nu^{(h)}_R,n\}$. Define \break $U_n := \sum_{i=1}^{n} (-g(R,x'_i,y'_i))$. From Wald's equation, Eq. \eqref{eq:expect}, and the fact that $g(r,x',y')$ is a nonincreasing function over $r \geq R$, we have
\begin{align*}
\frac{R}{4} \ee{\nu^{(h)}_R \wedge n ~\Big|~ h} &\leq \ee{-g(R,x'_i,y'_i)}\ee{\nu^{(h)}_R \wedge n ~\Big|~ h} \\
&= \ee{U_{\nu^{(h)}_R \wedge n} ~\Big|~ h} \\
&\leq \ee{\sum_{i = 1}^{\nu^{(h)}_R \wedge n} (-g(r_{i-1},x'_i,y'_i)) ~\Big|~ h} \leq h\,,
\end{align*}
for all $n$. As shown in the previous section, it follows that $\eee{\nu^{(h)}_R \mid h} = \lim_{n\to\infty} \eee{\nu^{(h)}_R \wedge n \mid h} \leq \frac{4h}{R}<\infty$. Similarly,
{\small
\begin{align*}
&\left(\ee{-g(R,x'_i,y'_i)}\right)^2\vv{\nu^{(h)}_R \wedge n ~\Big|~ h} \leq 2\bigg[ \vv{U_{\nu^{(h)}_R \wedge n} ~\Big|~ h} \\
&~ + \vv{(\nu^{(h)}_R \wedge n)\ee{-g(R,x'_i,y'_i)} - U_{\nu^{(h)}_R \wedge n} ~\Big|~ h}\bigg] \\
&\leq 2\bigg[\vv{U_{\nu^{(h)}_R \wedge n} ~\Big|~ h} + \ee{\nu^{(h)}_R \wedge n ~\Big|~ h}\vv{-g(R,x'_i,y'_i)} \bigg] \\
&\leq 2\left[h^2 + \frac{4h}{R}\frac{R^2}{4}\right]\,,
\end{align*}}
for all $n$, where the second inequality is due to Wald's identity (\cite{Resnik}, page $398$). Thus,
\begin{align}
\label{eq:finite variance}
\vv{\nu^{(h)}_R ~\Big|~ h} &= \lim_{n\to\infty} \vv{\nu^{(h)}_R \wedge n ~\Big|~ h} \nonumber \\
&\leq \frac{32h(h+R)}{R^2}<\infty\,.
\end{align}

This proves that the variance of path length generated by the random $\frac{1}{2}$disk routing scheme is finite almost surely. Next we will find some asymptotically tight bounds on the variance of the routing path lengths.

Let $S_{\nu} := \sum_{n=1}^{\nu} x'_n$ for a stopping time $\nu$ such that $\{\nu \geq N\}$ and $\{x'_n\}_{n < N}$ are independent and $\ee{\nu}<\infty$. Then by Wald's identity (\cite{Resnik}, page $398$) we have $\ee{S_{\nu}} = \ee{x'_n}\ee{\nu}$ and
\begin{align*}
\vv{\nu\ee{x'_n}-S_{\nu}} = \ee{(S_{\nu}-\nu\ee{x'_n})^2} = \ee{\nu}\vv{x'_n}\,.
\end{align*}

As such, we have
\begin{align*}
&\bigg| \sqrt{\vv{\nu}}\ee{x'_n} - \sqrt{\ee{\nu}\vv{x'_n}}\bigg| = \\
&\quad \left| \sqrt{\vv{\nu\ee{x'_n}}} - \sqrt{\vv{\nu\ee{x'_n}-S_{\nu}}}\right| \leq \sqrt{\vv{S_{\nu}}}.
\end{align*}

In particular, for $\nu = \nu^{(h)}_R$, we have
\begin{equation}
\label{eq:crude bound on var(R)}
\left| \sqrt{\frac{\vv{\nu^{(h)}_R\mid h}}{\ee{\nu^{(h)}_R \mid h}}} - \sqrt{\frac{\vv{x'_n}}{(\ee{x'_n})^2}} \,\right| \leq \sqrt{\frac{\vv{S_{\nu^{(h)}_R}\mid h}}{\ee{\nu^{(h)}_R\mid h}(\ee{x'_n})^2}}\,.
\end{equation}

Hence, in order to prove the limit in \eqref{eq:bound on Var(R)}, we need to show that
\begin{equation*}
\frac{\vv{S_{\nu^{(h)}_R}\mid h}}{\ee{\nu^{(h)}_R\mid h}(\ee{x'_n})^2} \sim \frac{\vv{S_{\nu^{(h)}_R}\mid h}}{\frac{3\pi}{16}Rh} \to 0\,,
\end{equation*}
as $N\to\infty$. Suppose $R \leq a < b \leq h$ and note that
\begin{equation*}
-g(r_{n-1},x'_n,y'_n) \leq x'_n \leq -g(r_{n-1},x'_n,y'_n) + \frac{R^2}{2r_{n-1}}\,;
\end{equation*}
then together with \eqref{eq:ba}, we obtain
\begin{align*}
b-a &\leq \sum_{n = 1+\nu^{(a)}_R}^{\nu^{(b)}_R} (-g(r_{n-1},x'_n,y'_n)) \leq \sum_{n = 1+\nu^{(a)}_R}^{\nu^{(b)}_R} x'_n \\
&= S_{\nu^{(b)}_R} - S_{\nu^{(a)}_R} \\
&\leq \sum_{n = 1+\nu^{(a)}_R}^{\nu^{(b)}_R} (-g(r_{n-1},x'_n,y'_n)) + \sum_{n = 1+\nu^{(a)}_R}^{\nu^{(b)}_R} \frac{R^2}{2r_{n-1}} \\
&\leq b - a + R + \frac{R^2}{2a}\nu^{(b)}_R\,,
\end{align*}
where the last inequality is due to the fact that $r_n \geq a$ for $\nu^{(a)}_R \leq n \leq \nu^{(b)}_R$. Therefore, we obtain
\begin{align*}
&\sqrt{\vv{S_{\nu^{(b)}_R} - S_{\nu^{(a)}_R}~\Big|~ a,b} } \\
&\qquad= \sqrt{\ee{\left[S_{\nu^{(b)}_R} - S_{\nu^{(a)}_R}- \ee{S_{\nu^{(b)}_R} - S_{\nu^{(a)}_R}}\right]^2 ~\Big|~ a,b}} \\
&\qquad\leq \sqrt{\ee{\left[S_{\nu^{(b)}_R} - S_{\nu^{(a)}_R} - b + a \right]^2 ~\Big|~ a,b}} \\
&\qquad\leq \sqrt{\ee{\left[R + \frac{R^2}{2a}\nu^{(b)}_R\right]^2 ~\Big|~ a,b}} \\
&\qquad\leq R + \frac{R^2}{2a}\ee{\nu^{(b)}_R\mid b} + \frac{R^2}{2a}\sqrt{\vv{\nu^{(b)}_R \mid b}} \\
&\qquad\leq 6R + \frac{5Rb}{a}\,,
\end{align*}
using \eqref{eq:finite variance} and the fact that $\eee{\nu^{(b)}_R\mid b} \leq \frac{4b}{R}$ and $\vvv{\nu^{(b)}_R \mid b} \leq \frac{32b(b+R)}{R^2}$. Finally, we let $a_i = R\left(\frac{h}{R}\right)^{i/k}$, for $k = \lceil \log \frac{h}{R} \rceil$ and $i = 0,1,2,\ldots,k$, where $\lceil \log \frac{h}{R} \rceil$ is the smallest integer larger than $\log \frac{h}{R}$. Then we have

\begin{align}
\label{eq:var_offset}
\sqrt{\vv{S_{\nu^{(h)}_R}\mid h}} &\leq \sum_{i=1}^{k} \sqrt{\vv{S_{\nu^{(a_i)}_R} - S_{\nu^{(a_{i-1})}_R}~\Big|~ h}} \nonumber \\
&\leq 6kR + 5R\sum_{i=1}^{k} \frac{a_i}{a_{i-1}} \nonumber \\
&\leq (6+5e)(1+\log \frac{h}{R})R.
\end{align}

From this, it follows that
\begin{equation*}
\sqrt{\frac{\vv{S_{\nu^{(h)}_R}\mid h}}{Rh}} \leq (6+5e)(1+\log \frac{h}{R})\sqrt{\frac{R}{h}} \to 0\,
\end{equation*}
as $N\to\infty$, which concludes the proof for the limit in Eq. \eqref{eq:bound on Var(R)}.

\begin{remark}
\label{rem:path-stretch}
It is worth noting that the path-stretch statistics can be easily derived from the hop-count statistics: Let $\mathcal{L}_{\nu_R^{(h)}}$ denote the path-stretch of a routing path with length $\nu_R^{(h)}$ connecting a source-destination pair that is $h$-distance apart, i.e., $\mathcal{L}_{\nu_R^{(h)}} := \|X_1-X_0\|+\|X_2-X_1\|+\cdots+\|X_{\nu_R^{(h)}+1}-X_{\nu_R^{(h)}}\|$. Then, it is easy to show that $\eee{\mathcal{L}_{\nu_R^{(h)}}\mid h} \sim  \frac{\pi}{2}h$ and $\vvv{\mathcal{L}_{\nu_R^{(h)}}\mid h} \sim \frac{\pi}{12}Rh$ as $N\to\infty$. Therefore, in the case of a dense network, $\mathcal{L}_{\nu_R^{(h)}}\to \frac{\pi}{2}h$ a.a.s. since $\vvv{\mathcal{L}_{\nu_R^{(h)}}}\to 0$ as $N\to\infty$.
\end{remark}

\section{Simulation Results}
\label{sec:Simulation Results}

In this section we compare our analytical results with some empirical results derived through simulation. In Fig. \ref{fig:routingsamples}, we depict some realizations for the routing paths generated by the random $\frac{1}{2}$disk routing scheme for an arbitrary source located at $(-1/4,-1/4)$ and its destination at $(1/4,1/4)$ with the following network specifications: $|A| = 1$, $\lambda = 10^{6}$, and $R = \sqrt{\frac{2\log \lambda}{\lambda}} \approx 5.2\times 10^{-3}$. As illustrated in this figure, the path realizations do not closely follow the direct line connecting the source-destination nodes. The lengthes of the routing paths are $328$, $314$, $343$ for the realizations depicted in Fig. \ref{fig:routingsamples} (a), (b), and (c) respectively. Fig. \ref{fig:routingsamples} (d) depicts an ensemble of thirty realizations of the random $\frac{1}{2}$disk routing scheme. Based on \eqref{eq:E(nu)-bound} we obtain the lower and upper bounds of $314$, $370$ for the expected path length with the asymptotic value of $317$. (Note that the bounds derived in \eqref{eq:E(nu)-bound} are for the expected path length; therefore, individual realizations for the path length might violate these bounds.)
\begin{figure}[ht]
  \centering
  \includegraphics[width=3.2in]{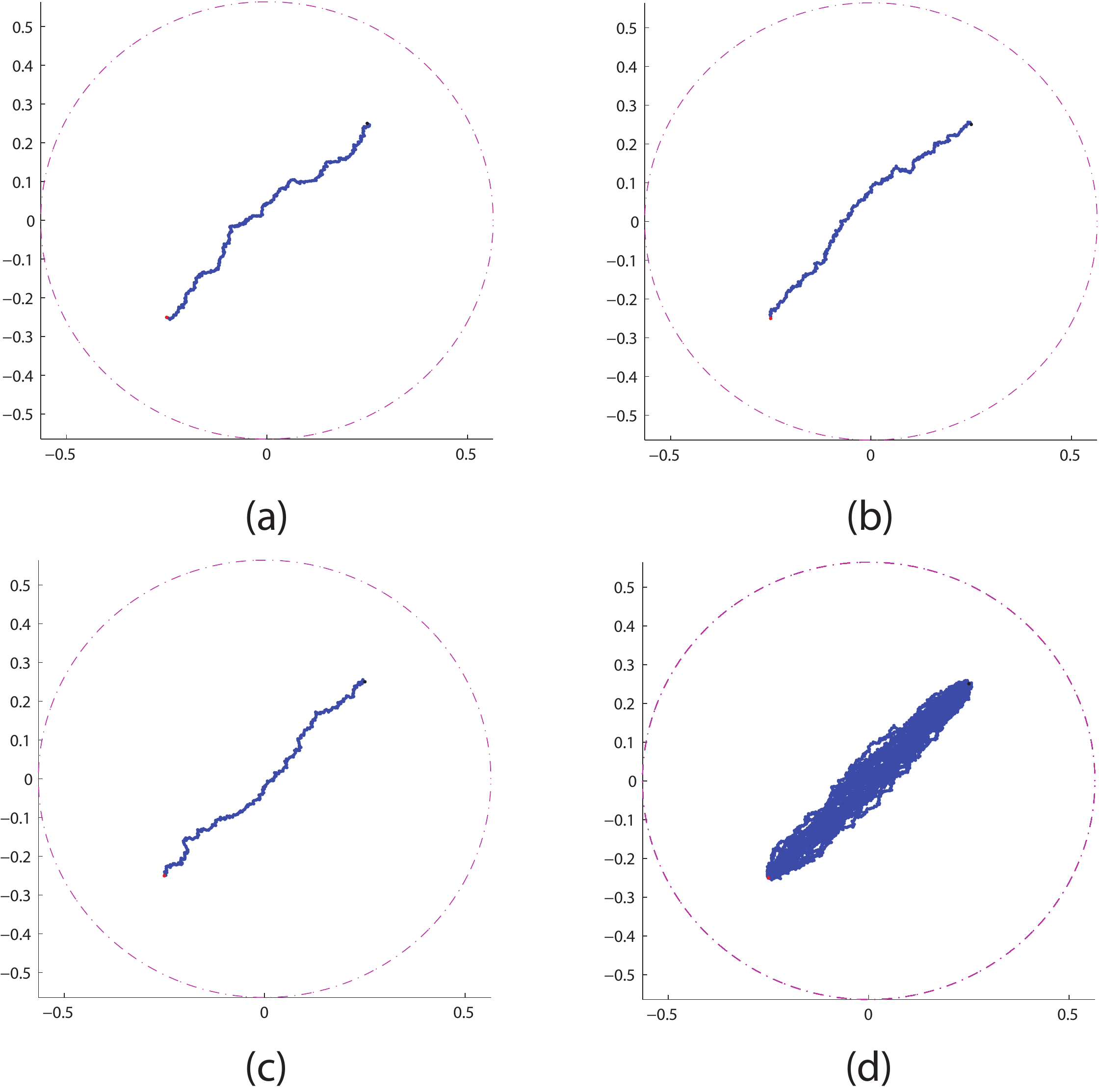}
  \caption{Random $\frac{1}{2}$disk routing realizations for $\lambda = 10^{6}$, $|A| = 1$, and $R = \sqrt{\frac{2\log \lambda}{\lambda}}$, when the source is located at $(-1/4,-1/4)$ and its destination is located at $(1/4,1/4)$. The dashed circle demonstrates the network boundary.}
  \label{fig:routingsamples}
\end{figure}

The following empirical path length statistics are obtained by generating $100$ realizations of the network via placing $N \sim \textrm{Poisson}(\lambda)$ nodes uniformly over a circular disk with unit area. For each network realization we constructed $100$ realizations for the random $\frac{1}{2}$disk routing path: starting from a fixed source node, we find the subsequent relaying nodes according to the rRSL scheme until (possibly) reaching the fixed destination node. Source and destination are set $h = \sqrt{2}/2$ distance apart and the transmission ranges are chosen as $R = \sqrt{\frac{2\log \lambda}{\lambda}}$. In Fig. \ref{fig:ExpectedLength}, we compare the (normalized) empirical mean, $\frac{R}{h}\eee{\nu^{(h)}_R}$, of the path lengths generated by the random $\frac{1}{2}$disk routing scheme with the analytical bounds derived in Eq. \eqref{eq:E(nu)-bound} for different values of network node density. As shown in this figure, the normalized empirical mean converges to $3\pi/4 \approx 2.3562$, and is always bounded by the expressions derived in Eq. \eqref{eq:E(nu)-bound}.
\begin{figure}[ht]
  \centering
  \includegraphics[width=3.2in]{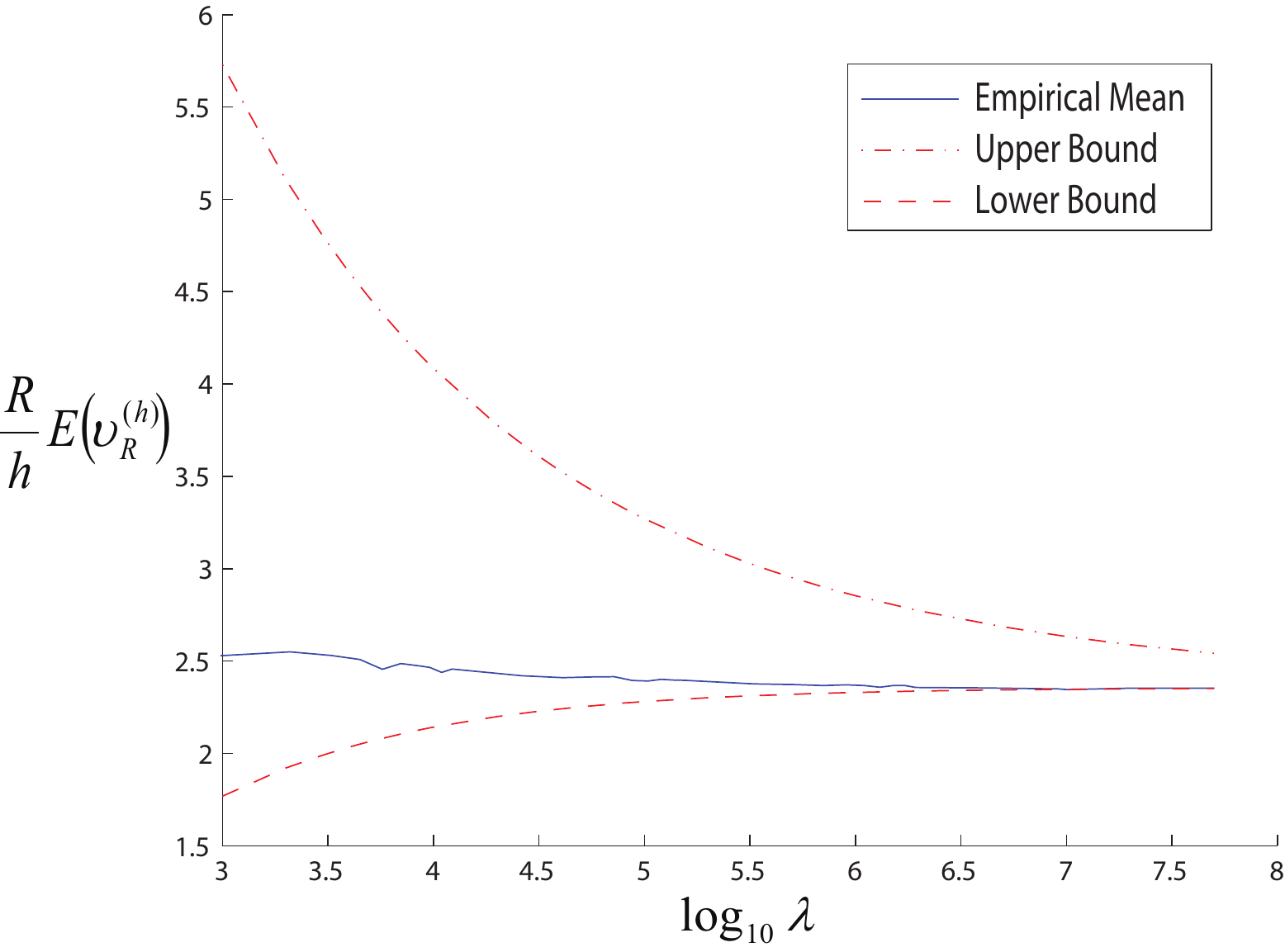}
  \caption{Numerical comparison between analytical bounds derived in Eq. \eqref{eq:E(nu)-bound} and the (normalized) empirical mean of the path length generated by the random $\frac{1}{2}$disk routing scheme when $h = \sqrt{2}/2$, $|A| = 1$, and $R = \sqrt{\frac{2\log \lambda}{\lambda}}$.}
  \label{fig:ExpectedLength}
\end{figure}

In Fig. \ref{fig:Lengthvariance}, we compare the empirical variance-to-mean ratio of the random $\frac{1}{2}$disk routing scheme, $\sqrt{\vvv{\nu^{(h)}_R}/\eee{\nu^{(h)}_R}}$, with the analytical bounds derived in Eq. \eqref{eq:var_offset} for different values of network node density. As shown in this figure, the normalized empirical standard deviation converges to $\sqrt{9\pi^2/64 - 1} \approx 0.6228$, and is always bounded by the expressions derived in Eq. \eqref{eq:var_offset}. Furthermore, it can be seen that although the bounds in \eqref{eq:var_offset} are quite loose for small values of $\lambda$, the asymptotic standard deviation derived in \eqref{eq:bound on Var(R)} is very close to the empirical standard deviation even for small values of $\lambda$.
\begin{figure}[ht]
  \centering
  \includegraphics[width=3.2in]{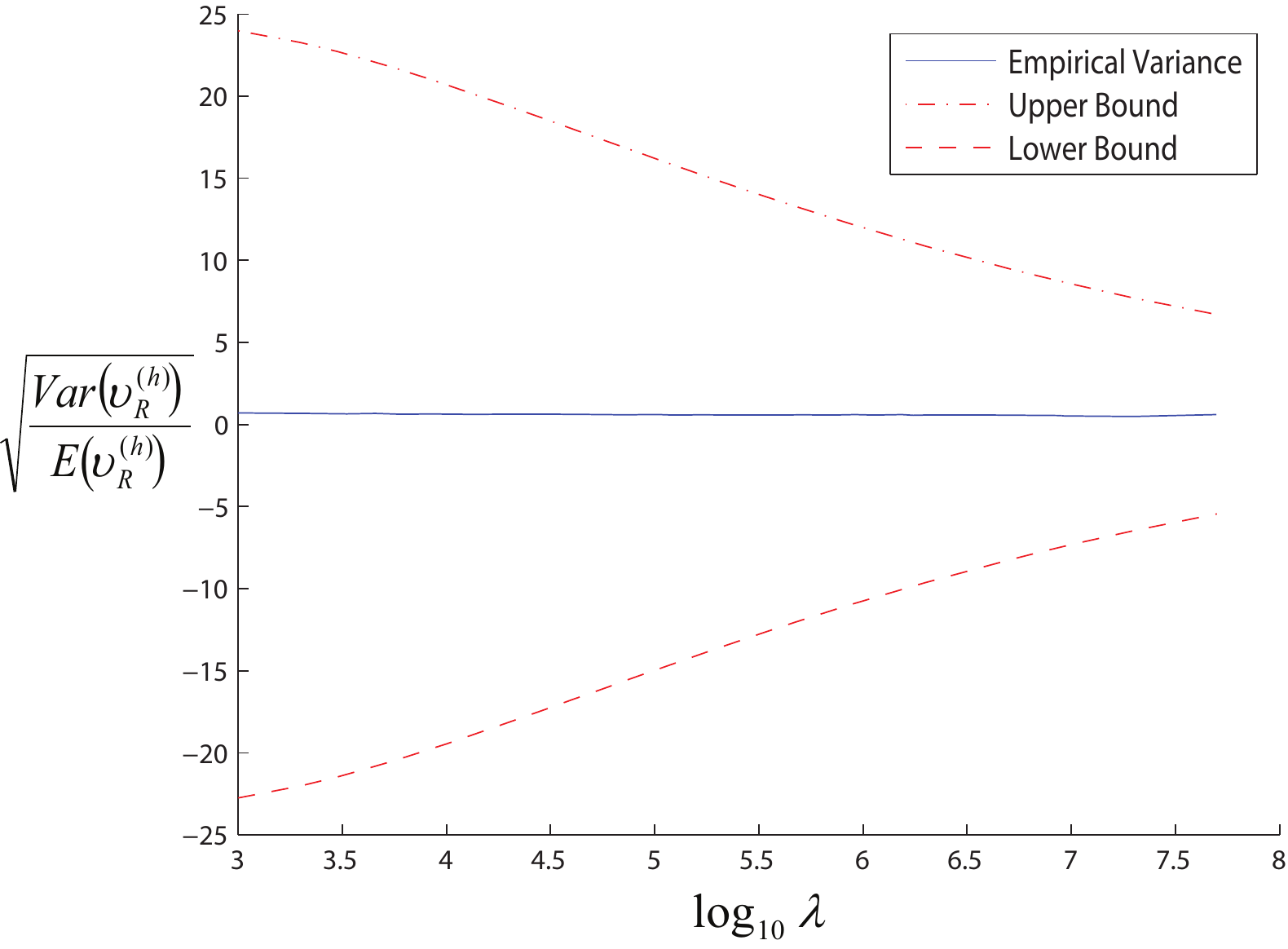}
  \caption{Numerical comparison between analytical bounds derived in Eq. \eqref{eq:crude bound on var(R)} and the (normalized) empirical standard deviation of the path length generated by the random $\frac{1}{2}$disk routing scheme, when $h = \sqrt{2}/2$, $|A| = 1$, and $R = \sqrt{\frac{2\log \lambda}{\lambda}}$.}
  \label{fig:Lengthvariance}
\end{figure}

In Fig. \ref{fig:Length}, we demonstrate the deviation of the path length realizations from its asymptotic expected value for different values of network node density. As shown in this figure, the deviation of the path length realizations increases as the network density and consequently the expected length of the routing path increases. However, all realizations stay relatively close to the value predicted by Eq. \eqref{eq:expected length_2}.
\begin{figure}[ht]
  \centering
  \includegraphics[width=3.2in]{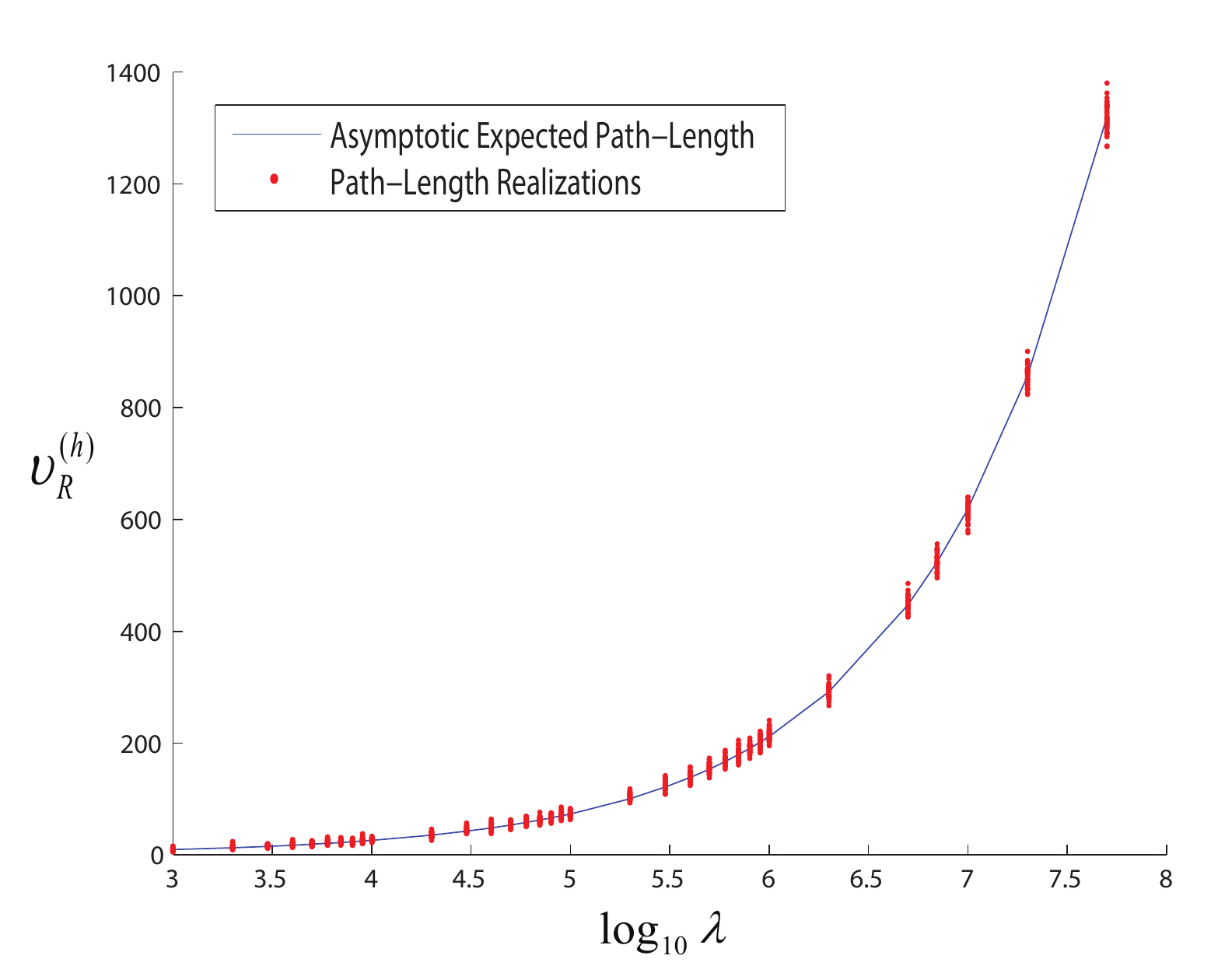}
  \caption{Random $\frac{1}{2}$disk routing realizations for $\lambda = 10^{6}$, $|A| = 1$, and $R = \sqrt{\frac{2\log \lambda}{\lambda}}$, when the source is located at $(-1/4,-1/4)$ and its destination is located at $(1/4,1/4)$.}
  \label{fig:Length}
\end{figure}

As mentioned earlier, we ignored the edge effect when computing the asymptotic path length statistics of the random $\frac{1}{2}$disk routing scheme. In Figs. \ref{fig:meanHop_edge} and \ref{fig:varHop_edge}, we consider two source-destination pairs that are close to the network edge with different distances and investigate whether routing ``next to the boundary'' has a considerable impact on the length of the routing paths. We consider a source node $S$ at $(-0.379,-0.379)$ and two destination nodes $Dst_1$ at $(0.3267,-0.4234)$ and $Dst_2$ at $(-0.315,-0.4336)$ such that $h_1 = \|S-Dst_1\| = \sqrt{2}/2$ and $h_2 = \|S-Dst_2\| = \sqrt{2}/33$. Note that $\|S\| = 0.95L$, $\|Dst_1\| = 0.948L$, and $\|Dst_2\| = 0.95L$. Fig. \ref{fig:meanHop_edge} depicts the empirical mean and Fig. \ref{fig:varHop_edge} depicts the empirical variance-to-mean ratio of paths generated by the random $\frac{1}{2}$disk routing scheme for source-destination pairs a) $S-Dst_1$ and b) $S-Dst_2$. Comparing these figures with Figs. \ref{fig:ExpectedLength} and \ref{fig:Lengthvariance}, we observe that given a fixed $h$, routing close to the network edge does not affect the \emph{asymptotic} path statistics. Intuitively, as shown in Remark \ref{rem:h}, the distances between source-destination pairs will be of order $h = \Theta(L)$ with high probability where $h/R\to 0$ as $N\to\infty$. Therefore, for large enough $N$, it is very unlikely that a considerable portion of the path connecting a source to its destination traverses close to the network edge. As such, the effect of the routing close to the boundary on path statistics is relatively negligible for large network sizes. However, for small network sizes (when $h$ and $R$ are comparable), the empirical mean of the path length is smaller than the value predicted in \eqref{eq:expected length_1}.
\begin{figure}[ht]
  \centering
  \includegraphics[width=3.5in]{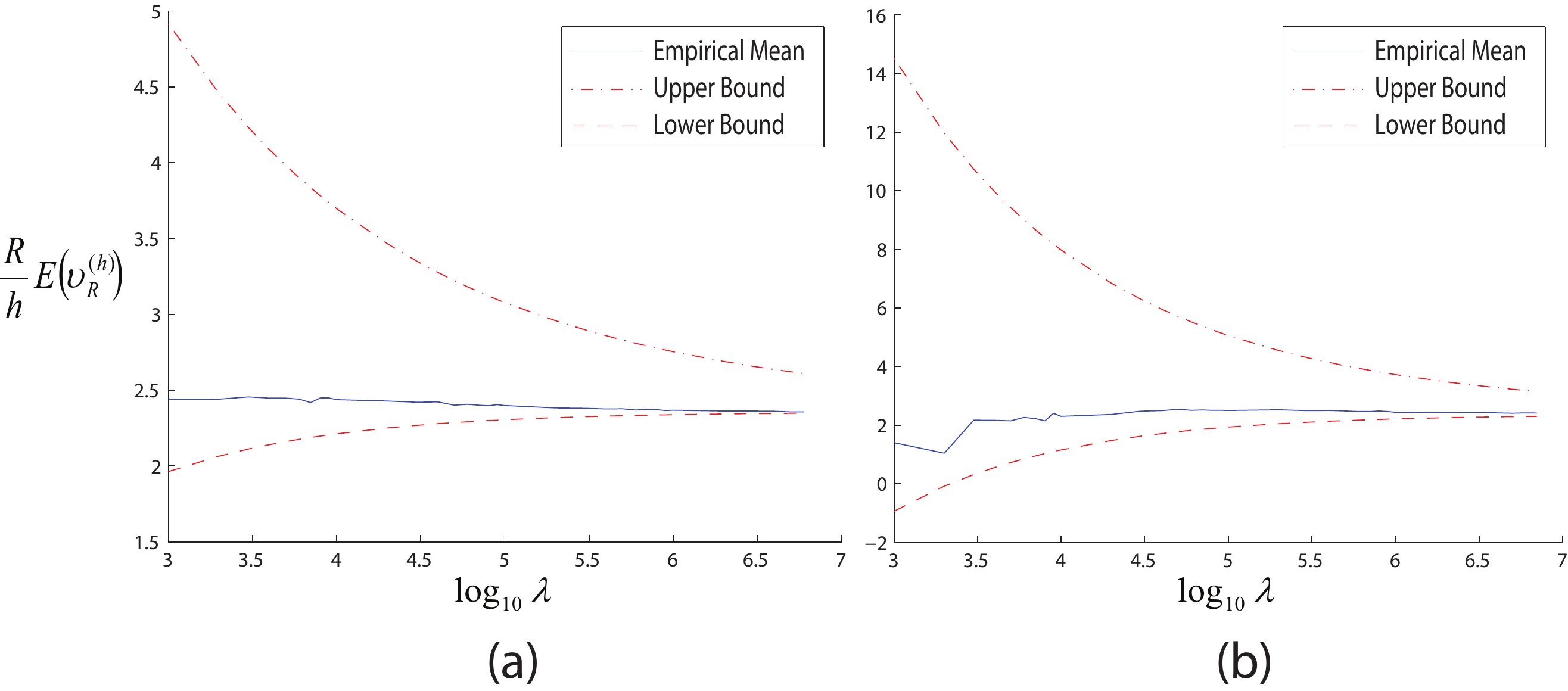}
  \caption{Numerical comparison between analytical bounds derived in Eq. \eqref{eq:E(nu)-bound} and the (normalized) empirical mean of the path length generated by the random $\frac{1}{2}$disk routing scheme for source-destination pairs that are close to the network boundary when a) $h = \sqrt{2}/2$ and b) $h = \sqrt{2}/33$. In both cases $|A| = 1$, $R = \sqrt{\frac{2\log \lambda}{\lambda}}$, and $\|S\| = \|Dst\| \simeq 0.95L$.}
  \label{fig:meanHop_edge}
\end{figure}
\begin{figure}[ht]
  \centering
  \includegraphics[width=3.5in]{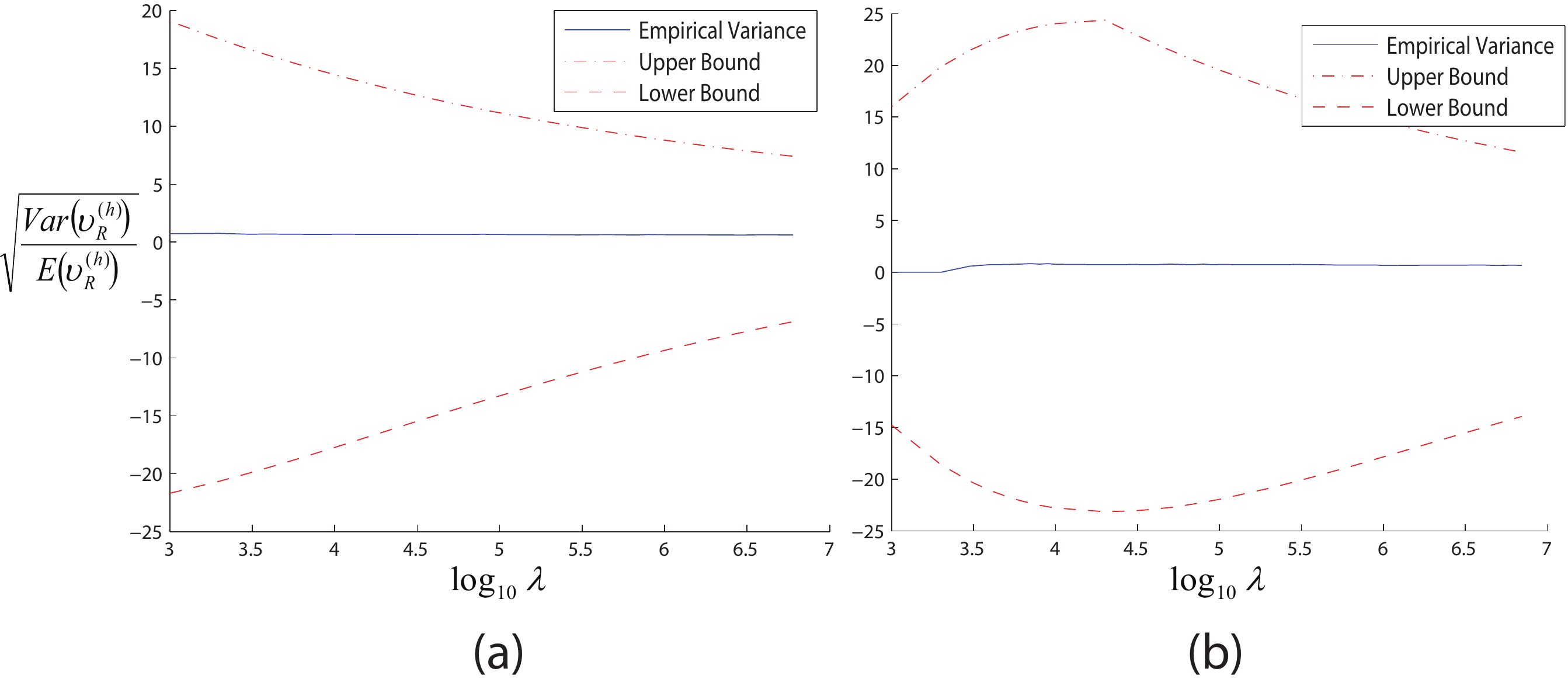}
  \caption{Numerical comparison between analytical bounds derived in Eq. \eqref{eq:crude bound on var(R)} and the (normalized) empirical standard deviation of the path length generated by the random $\frac{1}{2}$disk routing scheme for source-destination pairs that are close to the network boundary when a) $h = \sqrt{2}/2$ and b) $h = \sqrt{2}/33$. In both cases $|A| = 1$, $R = \sqrt{\frac{2\log \lambda}{\lambda}}$, and $\|S\| = \|Dst\| \simeq 0.95L$.}
  \label{fig:varHop_edge}
\end{figure}

\section{Generalization}
\label{sec:generalization}

In the previous sections we derived sufficient conditions for the network to be connected deploying the random $\frac{1}{2}$disk routing scheme and quantified the mean and variance asymptotes of the routing path generated the random $\frac{1}{2}$disk routing scheme. In this section we present some guidelines that generalize the aforementioned results for some other variants of the geometric routing schemes such as MFR, DIR, NFP, and the random $\eta$disk routing scheme, where the latter one is the generalized version of the random $\frac{1}{2}$disk routing scheme with an $\eta$disk as its RSR.

Observe that the results of Section \ref{sec:connectivity} were derived for the general $\eta$disks relay selection region which encompasses most of the geometric routing schemes such as MFR, DIR, NFP, and the random $\eta$disk routing scheme. Let $\Delta$ be the set of all nodes (in the RSR of a specific transmitting node) that can be selected as the next relay by the relay selection rule (RSL) of the geometric routing scheme. For example, in the cases of MFR, DIR, NFP, and the random $\eta$disk routing scheme we have: $\Delta_{\textrm{MFR}} := \{(x'_n,y'_n)\in \frac{1}{2}\textrm{RSR}:\, x'_n \geq x, \textrm{ for all } (x,y)\in \frac{1}{2}\textrm{RSR}\}$, $\Delta_{\textrm{DIR}} := \{(x'_n,y'_n)\in \frac{1}{2}\textrm{RSR}:\, |\tan^{-1}(y'_n/x'_n)|\leq |\tan^{-1}(y/x)|, \textrm{ for all } (x,y)\in \frac{1}{2}\textrm{RSR}\}$, $\Delta_{\textrm{NFP}} := \{(x'_n,y'_n)\in \frac{1}{2}\textrm{RSR}:\, \sqrt{(x'_n)^2+(y'_n)^2} \leq \sqrt{(x)^2+(y)^2}, \textrm{ for all } (x,y)\in \frac{1}{2}\textrm{RSR}\}$, and $\Delta_{\eta} = \{(x'_n,y'_n)\in \eta\textrm{RSR}\}$, respectively. Since the nodes in $\Delta$ (if more than one) are indistinguishable by the RSL, the transmitting node selects one of the nodes in $\Delta$ randomly as the next relay. Next, we present the generalized results for the network connectivity and the mean and variance asymptotes of routing paths generated by the general geometric routing schemes.

\begin{coro}
\label{cor: Generalconnectivity}
Let $\Delta$ be the set of all nodes that can be selected by the relay selection rule as the next relay. Then the network is connected employing the geometric routing scheme a.a.s. if $\ee{g(R,x',y')\I{\Delta}} < 0$.
\end{coro}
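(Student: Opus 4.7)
The plan is to adapt the argument of Section \ref{subsec:Expected Length} essentially verbatim to the more general setting specified by the selection set $\Delta$. The only three ingredients of that proof that depend on the RSL are: (a) the monotonicity of $g(r,x',y')$ in $r$ for $r\geq R$; (b) an i.i.d.\ (or asymptotically i.i.d.) structure for the per-hop displacements $(x'_n,y'_n)$ with a well-defined, negative mean of $g(R,x'_n,y'_n)$; and (c) non-emptiness of the RSR in every direction, already established in Section \ref{sec:connectivity}. Under the Markov-approximation extension of Proposition \ref{prop1} to general RSR--RSL pairs (implicit in the phrasing of this corollary and in the preceding paragraph), these three ingredients remain available, and the conclusion follows.

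Concretely, I would proceed in four steps. First, the monotonicity of $g$ in $r$ gives $g(r_{n-1},x'_n,y'_n)\leq g(R,x'_n,y'_n)$ for every $n\leq\nu^{(h)}_R$; summing and using $r_{\nu^{(h)}_R}\geq 0$ produces the analog of \eqref{eq:ba},
\begin{equation*}
-h \;\leq\; r_{\nu^{(h)}_R}-h \;\leq\; \sum_{n=1}^{\nu^{(h)}_R} g(R,x'_n,y'_n).
\end{equation*}
Second, invoking Wald's identity on the (asymptotically) i.i.d.\ bounded increments $g(R,x'_n,y'_n)$ yields
\begin{equation*}
-h \;\leq\; \ee{\nu^{(h)}_R\mid h}\cdot\frac{\ee{g(R,x',y')\I{\Delta}}}{\pp{(x',y')\in\Delta}},
\end{equation*}
since $\eee{g(R,x',y')\mid(x',y')\in\Delta}=\eee{g(R,x',y')\I{\Delta}}/\pp{(x',y')\in\Delta}$ when $(x',y')$ is uniform on the full RSR. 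Third, the hypothesis $\ee{g(R,x',y')\I{\Delta}}<0$ converts this into the finite upper bound
\begin{equation*}
\ee{\nu^{(h)}_R\mid h} \;\leq\; \frac{h\,\pp{(x',y')\in\Delta}}{-\ee{g(R,x',y')\I{\Delta}}}.
\end{equation*}
Fourth, finiteness of $\eee{\nu^{(h)}_R\mid h}$ together with the non-emptiness result of Section \ref{sec:connectivity} implies, exactly as in Section \ref{subsec:Expected Length}, that every source--destination path terminates in finitely many hops a.a.s., which is precisely the connectivity notion of Section \ref{sec:SystemModel}.

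The main obstacle will be making the Wald step rigorous for deterministic selection rules such as MFR, DIR, and NFP. For those schemes the selected relay is typically an extremizing point of the RSR, so $\Delta$ reduces (in the continuum model) to a set of zero Lebesgue measure, and the ``selected'' displacement is distributed as an appropriate order statistic of a uniform Poisson sample over the RSR rather than as a uniform point on $\Delta$. The hypothesis $\ee{g(R,x',y')\I{\Delta}}<0$ must accordingly be reinterpreted as the expected progress of that order statistic, and Proposition \ref{prop1} must be correspondingly extended to show that the resulting sequence of extremizing displacements is still asymptotically i.i.d.\ conditional on $X_n$. Once that reinterpretation is in place, the arithmetic above goes through verbatim and the corollary follows.
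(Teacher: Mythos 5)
Your proposal is correct and follows essentially the same route as the paper, whose entire proof is the one-line observation that the claim is ``immediate due to \eqref{eq:genralbound}'' --- i.e., the Wald-based bound $\ee{\nu^{(h)}_R\mid h}\le h/(-\ee{g(R,x',y')\I{\Delta}})$ is finite precisely when $\ee{g(R,x',y')\I{\Delta}}<0$, and finiteness of the stopping time plus the relay-existence result of Section \ref{sec:connectivity} gives connectivity. Your additional care about the normalization by $\pp{(x',y')\in\Delta}$ and about deterministic rules (MFR, DIR, NFP) where $\Delta$ is an extremizing set of measure zero is a legitimate refinement that the paper leaves implicit in its definition of $\I{\Delta}$.
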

\begin{proof}
The proof is immediate due to \eqref{eq:genralbound}.
\end{proof}

\begin{coro}
\label{cor: Generalmean}
If $\ee{g(R,x',y')\I{\Delta}} < 0$ and $\ee{(y')^2\I{\Delta}} \leq R\,\ee{x'\I{\Delta}}$, the expected length of the routing path generated by the general geometric routing scheme connecting a source-destination pair that is $h$-distance apart scales as $\ee{\nu\mid h} \sim h / \ee{x'\I{\Delta}}$ as $N\to\infty$.
\end{coro}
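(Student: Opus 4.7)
The plan is to mirror the argument of Section \ref{subsec:Expected Length}, now with the next-hop distribution $(x'_{n+1},y'_{n+1})$ supported on the RSL-admissible subset $\Delta$ rather than the full $\frac{1}{2}$RSR. The distance recursion $r_{n+1}=\sqrt{(r_n-x'_{n+1})^2+(y'_{n+1})^2}$ and the monotonicity of $g(r,x',y')=\sqrt{(r-x')^2+(y')^2}-r$ in $r\ge R$ carry over verbatim. Multiplying the elementary inequality leading to \eqref{eq:bound on g} by $\I{\Delta}$ gives
\begin{equation*}
-x'\I{\Delta}\;\le\;g(r,x',y')\I{\Delta}\;\le\;-x'\I{\Delta}+\frac{(y')^2\I{\Delta}}{2(r-R)},
\end{equation*}
which, together with Corollary \ref{cor: Generalconnectivity}, will drive the entire argument.

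First, I would apply Wald's equality to the $\Delta$-restricted analogues of \eqref{eq:ba}--\eqref{eq:bb}, yielding, for every $R\le r\le h$,
\begin{equation*}
\frac{h-r}{\ee{x'\I{\Delta}}}\;\le\;\ee{\nu^{(h)}_r\mid h}\;\le\;\frac{h-r+R}{-\ee{g(r,x',y')\I{\Delta}}}.
\end{equation*}
The upper bound on $g\I{\Delta}$ implies $-\ee{g(r,x',y')\I{\Delta}}\ge \ee{x'\I{\Delta}}-\ee{(y')^2\I{\Delta}}/(2(r-R))$; the hypothesis $\ee{(y')^2\I{\Delta}}\le R\,\ee{x'\I{\Delta}}$ keeps this quantity positive for $r\ge 3R/2$ and forces it to tend to $\ee{x'\I{\Delta}}$ as $r\to\infty$, while finiteness of $\ee{\nu^{(h)}_R\mid h}$ follows directly from $\ee{g(R,x',y')\I{\Delta}}<0$.

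Next, I would use the splitting inequality $\ee{\nu^{(h)}_R\mid h}\le\ee{\nu^{(h)}_r\mid h}+\ee{\nu^{(r)}_R\mid r}$ with an intermediate radius $r=\sqrt{hR}$, so that $r/h\to 0$ and $r/R\to\infty$ as $h/R\to\infty$ (which happens a.s.\ when $N\to\infty$ by Remark \ref{rem:h}). Normalized by $h/R$, the first term converges to $R/\ee{x'\I{\Delta}}$ because both $r/h\to 0$ and the denominator correction vanishes, while the second term, bounded via the same Wald argument by $r/(-\ee{g(R,x',y')\I{\Delta}})$, becomes $O(\sqrt{R/h})=o(1)$ after this normalization. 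Matched against the lower bound $\ee{\nu^{(h)}_R\mid h}\ge(h-R)/\ee{x'\I{\Delta}}$, which also normalizes to $R/\ee{x'\I{\Delta}}$, this delivers $\ee{\nu\mid h}\sim h/\ee{x'\I{\Delta}}$.

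The main obstacle is not in the Wald bookkeeping but in justifying that the Markov approximation of Proposition \ref{prop1} still applies to each specific RSL. For the random $\eta$disk scheme, the argument is nearly identical, since $\Delta$ remains a deterministic geometric wedge and the overlap correction between consecutive RSRs still vanishes at rate $1/(dN)$. For MFR, DIR, and NFP, however, $\Delta$ is itself random: it depends on the full node configuration inside the $\frac{1}{2}$RSR, and one must recheck that the conditional distribution of $(x'_{n+1},y'_{n+1})$ given the past two relays still concentrates on the announced i.i.d.\ limit with error vanishing as $N\to\infty$. Once this is in place, verifying the two hypotheses $\ee{g(R,x',y')\I{\Delta}}<0$ and $\ee{(y')^2\I{\Delta}}\le R\,\ee{x'\I{\Delta}}$ for each concrete scheme reduces to a routine integration over the admissible region.
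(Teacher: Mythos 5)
Your proposal is correct and follows essentially the same route as the paper: the paper's proof likewise reuses the Wald-based bounds of Section \ref{subsec:Expected Length} with the $\Delta$-restricted moments, invoking \eqref{eq:bound on g} and an intermediate radius chosen by the intermediate value theorem (which, like your explicit $r=\sqrt{hR}$, is of order $\Theta(\sqrt{hR})$) to reproduce \eqref{eq:E(nu)-bound}. Your remark that the Markov/i.i.d.\ approximation must be re-justified for configuration-dependent RSLs such as MFR, DIR, and NFP is a fair caveat, but it is one the paper itself leaves implicit rather than a divergence in method.
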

\begin{proof}
The proof follows directly from \eqref{eq:bound on g} and noting that if $\ee{(y')^2\I{\Delta}} \leq R\,\ee{x'\I{\Delta}}$, using the intermediate value theorem, we can find $r$ such that $\frac{2R(r-h)}{\ee{(y')^2\I{\Delta}}} = \frac{R}{\ee{x'\I{\Delta}}}\,(\sqrt{\frac{h}{2R}}+1)$, which yields the bound in Eq. \eqref{eq:E(nu)-bound} and hence the desired result.
\end{proof}

\begin{coro}
\label{cor: Generalvar}
If $\ee{g(R,x',y')\I{\Delta}} < 0$, the variance of the path length generated by the general geometric routing scheme, normalized by its mean, scales as $\vv{\nu}/\ee{\nu} \sim \vv{x'\I{\Delta}}/\left(\ee{x'\I{\Delta}}\right)^2$ as $N\to\infty$.
\end{coro}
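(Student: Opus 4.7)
The plan is to mirror the derivation of Eq. \eqref{eq:bound on Var(R)} in Section \ref{subsec:Length Variance}, replacing the uniformly-sampled per-hop progress $x'_n$ by the effective per-hop progress under the general RSL. Under the Markovian approximation (which extends to the general $\eta$RSR by the same argument as Proposition \ref{prop1}, with $\Delta$ playing the role of the admissible relay set), the next hop is drawn uniformly from $\Delta$, so the sequence $\{(x'_n\I{\Delta}, y'_n\I{\Delta})\}$ is i.i.d. By hypothesis $\ee{g(R,x',y')\I{\Delta}}<0$, so Corollary \ref{cor: Generalconnectivity} combined with Wald's equation yields $\ee{\nu^{(h)}_R\mid h}\le h/(-\ee{g(R,x',y')\I{\Delta}})<\infty$, and the same argument that produced \eqref{eq:finite variance} (applied to $U_n=\sum_{i=1}^n -g(R,x'_i\I{\Delta},y'_i\I{\Delta})$) gives $\vv{\nu^{(h)}_R\mid h}=O(h^2/R^2)$.

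With these finiteness facts in hand, I would apply Wald's identity for variance to the partial sums $S_\nu := \sum_{n=1}^\nu x'_n\I{\Delta}$ at the stopping time $\nu=\nu^{(h)}_R$, which, exactly as in Section \ref{subsec:Length Variance}, produces
\begin{equation*}
\left|\sqrt{\vv{\nu^{(h)}_R\mid h}}\,\ee{x'\I{\Delta}} \,-\, \sqrt{\ee{\nu^{(h)}_R\mid h}\,\vv{x'\I{\Delta}}}\,\right| \;\le\; \sqrt{\vv{S_{\nu^{(h)}_R}\mid h}}\,.
\end{equation*}
Dividing through by $\ee{x'\I{\Delta}}\sqrt{\ee{\nu^{(h)}_R\mid h}}$ and invoking the asymptote $\ee{\nu^{(h)}_R\mid h}\sim h/\ee{x'\I{\Delta}}$ from Corollary \ref{cor: Generalmean}, the claim reduces to showing $\vv{S_{\nu^{(h)}_R}\mid h}/(Rh)\to 0$ as $N\to\infty$.

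The main obstacle is precisely this last step, and I would handle it by the geometric-telescope argument used to obtain \eqref{eq:var_offset}: partition $[R,h]$ via $a_i=R(h/R)^{i/k}$ with $k=\lceil\log(h/R)\rceil$, and bound each increment $\vv{S_{\nu^{(a_i)}_R}-S_{\nu^{(a_{i-1})}_R}\mid h}$ using the pointwise sandwich $-g(r_{n-1},x'_n,y'_n)\le x'_n\I{\Delta}\le -g(r_{n-1},x'_n,y'_n)+R^2/(2r_{n-1})$. The subtlety worth checking is that this sandwich rests only on the Taylor expansion of the Euclidean distance function (the same calculation that yielded \eqref{eq:bound on g}) and not on the uniformity of the sampling over the full $\frac{1}{2}$RSR, so it survives the restriction of the admissible set to $\Delta$. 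Summing the $k$ increments and using \eqref{eq:finite variance} on each one will give $\sqrt{\vv{S_{\nu^{(h)}_R}\mid h}}=O(R\log(h/R))$, hence $\sqrt{\vv{S_{\nu^{(h)}_R}\mid h}/(Rh)}=O\!\left(\log(h/R)\sqrt{R/h}\right)\to 0$, closing the argument.
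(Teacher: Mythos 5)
Your proposal is correct and is essentially the paper's own argument: the paper's proof of this corollary is literally the one-line statement that it ``follows the same steps as in Section \ref{subsec:Length Variance},'' and you have carried out exactly those steps (Wald's identity for the variance, the square-root triangle inequality, and the geometric telescope over $a_i = R(h/R)^{i/k}$) with $x'_n$ replaced by the progress restricted to $\Delta$. Your observation that the sandwich bound on $g$ depends only on the geometry of the Euclidean distance and not on uniform sampling over the full $\frac{1}{2}$RSR is the right justification for why the argument transfers.
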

\begin{proof}
The proof follows the same steps as in Section \ref{subsec:Length Variance}.
\end{proof}

\section{Conclusion}
\label{sec:conclusion}

In this paper, we presented a simple methodology employing statistical analysis and stochastic geometry to study geometric routing schemes in wireless ad-hoc networks, and in particular, analyzed the network layer performance of one such scheme named the random $\frac{1}{2}$disk routing scheme. We defined a notion of network connectivity considering the special local properties of geometric routing schemes and determined some sufficient conditions that guarantee network connectivity when each node finds its next relay in the so-defined $\frac{1}{2}$disk. More specifically, if all nodes transmit at a power that covers a normalized area $d$ and the expected number of nodes in the network is $N$, the network is connected a.a.s. if $\eta dN + \log d \to \infty$ when $N\to\infty$. Furthermore, we proved that the routing path progress conditioned on the previous two hops can be approximated with a Markov process. Then using this Markovian approximation, we derived exact asymptotic expressions for the mean and variance of the path length generated by the random $\frac{1}{2}$disk routing scheme. Furthermore, we provided guidelines to extend these results to other variants of geometric routing schemes such as MFR, DIR, and NFP.

\appendices
\section{Proof of Proposition \ref{prop1}}
\label{appendix:markov}

First, let us consider the distribution of a Poisson point process conditioned on deleting one point. Let $\Phi$ be a homogeneous Poisson point process with intensity $\lambda$ and assume a fixed region $D$. If $\phi(D) > 0$, one point in $D$ is selected at random and removed. Let $X$ be the location of that point. The distribution of $\Phi$ on $D^c$ remains Poisson and independent of $\Phi$ on $D$, and thus independent of both $\phi(D)$ and $X$. Let $\Phi'$ be the (point) process with the point at $X$ deleted. (Note that the distribution of $\Phi'$ is not the same as the reduced Palm distribution \cite{Baccelli2} of $\Phi$, as the location of node $X$ is random.)

Let $A_1, A_2, \ldots, A_k$ be a partition of $D$. Given $\phi(D) > 0$, the points in $D$ are distributed uniformly. If one point is removed at random, the remaining points are still distributed uniformly on $D$. Hence,
\begin{align}
\label{eq:dist1}
&\p{\bigcap_{j=1}^{k}\{\phi'(A_j) = n_j\}~ \Big|~ \phi(D) > 0, X}= \nonumber \\
&~ (1 - e^{-\lambda |D|})^{-1} \sum_{i=1}^{k} \frac{n_i+1}{n_1+\ldots+n_k+1}\, \cdot \nonumber \\
&~ \prod_{j=1}^{k} \frac{(\lambda |A_j|)^{n_j+\I{j=i}}}{(n_j+\I{j=i})!}e^{-\lambda |A_j|} = \nonumber \\
&~ \frac{\lambda |D|}{(1 - e^{-\lambda |D|})(n_1+\ldots+n_k+1)} \prod_{j=1}^{k}\frac{(\lambda |A_j|)^{n_j}}{(n_j)!}e^{-\lambda |A_j|}\,,
\end{align}
since $|A_1|+\ldots+ |A_k| = |D|$. Therefore, conditional on $\phi(D)>0$, $\Phi'$ is independent of the location of the removed point ($X$). In particular,
\begin{align*}
&\p{\phi'(D) = n ~\Big|~ \phi(D)>0,X}= \\
&\frac{(\lambda |D|)^{n+1}}{(n+1)!(1 - e^{-\lambda |D|})}e^{-\lambda |D|} = \\
&\p{\phi(D) = n+1 ~\Big|~ \phi(D)>0}\,.
\end{align*}
Furthermore, given $n_1 + \ldots + n_k = n > 0$, we have
\begin{align*}
&\p{\bigcap_{j=1}^{k}\{\phi'(A_j) = n_j\}~ \Big|~ \phi(D) > 0, \phi'(D) = n, X}= \\
&{n\choose n_1\cdots n_k} \prod_{j=1}^{k}\left(\frac{|A_j|}{|D|}\right)^{n_j}\,.
\end{align*}
Thus, for $A \subseteq D$ and given $\phi'(D) = n > 0$, $\phi'(A)$ is conditionally $\textrm{Binomial}\left(n,\frac{|A|}{|D|}\right)$. Without knowing $\phi'(D)$, however, we obtain from \eqref{eq:dist1} that
\begin{align}
\label{eq:dist2}
&\p{\phi'(A) = k ~\Big|~ \phi(D)>0,X}= \nonumber \\
&\frac{\lambda |D|e^{-\lambda |D|}}{(1 - e^{-\lambda |D|})} \sum_{j=0}^{\infty} \frac{1}{k+j+1}\frac{(\lambda |A|)^k}{k!}\frac{(\lambda |A^c\cap D|)^j}{j!}= \nonumber \\
&\frac{\lambda |D|e^{-\lambda |D|}}{(1 - e^{-\lambda |D|})} \frac{(\lambda |A|)^k}{k!}\int_{0}^{1}y^ke^{\lambda |A^c\cap D|y}\D{y}\,,
\end{align}
where the second equality is due to
\begin{align*}
\sum_{j=0}^{\infty} \frac{1}{k+j+1}\frac{a^j}{j!} &= \sum_{j=0}^{\infty} \frac{1}{a^{k+1}}\int_{0}^{a} \frac{x^{k+j}}{j!}\D{x} \\
&= \int_{0}^{a} \frac{x^k}{a^{k+1}}e^x\D{x} = \int_0^1 y^ke^{ay}\D{y}\,.
\end{align*}

\begin{figure*}[ht]
  \begin{subequations}
  \begin{align}
  \hline \nonumber \\
  \ee{\frac{\phi'(CD)}{\phi'(C)}\I{\phi'(C)>0}~\big|~\phi(D)>0, X}
  &= \sum_{n=0}^{\infty}\sum_{m=1}^{\infty} \frac{n}{n+m} \frac{(\lambda |CD^c|)^m}{m!}e^{-\lambda|CD^c|} \frac{\lambda |D| e^{-\lambda|D|}}{1-e^{-\lambda|D|}} \frac{(\lambda |CD|)^n}{n!} \int_0^1 y^n e^{\lambda |C^cD|y}\D{y} \nonumber \\
  &= \frac{\lambda |D| e^{-\lambda|C\cup D|}}{1-e^{-\lambda|D|}} \int_0^1 \int_0^1 \lambda |CD|ye^{\lambda\left(|CD|yw+|C^cD|y\right)}\left(e^{\lambda|CD^c|w} -1\right)\D{w}\D{y} \label{eq:inter_a}\\
  &< \frac{\lambda |D| e^{-\lambda|C\cup D|}}{1-e^{-\lambda|D|}} \int_0^1 \frac{|CD|y}{|CD|y+|CD^c|}\left(e^{\lambda(|CD|y+|CD^c|)}-1\right)e^{\lambda |C^cD|y}\D{y} \nonumber\\
  &< \frac{|CD|}{|C|} \left(1 - \frac{|D|}{|C^cD|}\frac{1 - e^{-\lambda |C^cD|}}{1-e^{-\lambda|D|}}e^{-\lambda|C|}\right)\,. \label{eq:inter_b} 
  \end{align}
  \end{subequations}
  \begin{align}
  \textrm{E}\bigg(\frac{\phi'(CD)}{\phi'(C)}\I{\phi'(C)>0}~\big|~ \phi(D)>0, X\bigg) 
  &> \frac{|CD|}{|C|}\frac{1}{(1-e^{-\lambda|D|})} \bigg\{\lambda |D| e^{-\lambda |D|} \int_0^1 ye^{\lambda |D| y} \D{y} - \lambda |D| e^{-\lambda|C\cup D|}\int_0^1 ye^{\lambda |C^c D| y} \D{y} \nonumber \\
  &~+ \frac{|C|}{|CD|}\left(-e^{-\lambda |CD^c|}(1-e^{-\lambda |D|}) + \frac{|D|}{|C^cD|}\left(e^{-\lambda|C|}-e^{-\lambda |C\cup D|}\right)\right)\bigg\} \nonumber \\
  &> \frac{|CD|}{|C|} \frac{1}{(1-e^{-\lambda|D|})} \left(1 - \frac{1}{\lambda |D|} - \frac{|C||D|}{|CD||C^cD|}e^{-2\lambda |CD^c|}\right)\,. \label{eq:ineq2}
  \\ \nonumber
  \\ \hline \nonumber
  \end{align}
\end{figure*}

After the aforementioned preliminaries, we now proceed with the proof of Proposition \ref{prop1}. Suppose $C$ is a random set that depends only on $X$.\footnote{Note that $D$ and $C$ here correspond to $D_{n-1}$ and $D_n$ in Section \ref{sec:Path Length Statistics}, respectively.} The points of $\Phi'$, if any, which are in $CD := C\cap D$, are uniformly distributed and independent of the points in $CD^c$, which are also uniformly distributed (if any). The combined points are uniformly distributed on $C$ \emph{only if} the expected proportion of points in $CD$ is $\frac{|CD|}{|C|}$.

However, the expected proportion of points in $CD$ is strictly less than $\frac{|CD|}{|C|}$ in our case as we now compute. Given $\phi'(C) > 0$, the probability that a randomly selected point in $C$ is also in $D$ is $\eee{\frac{\phi'(CD)}{\phi'(C)}~\big|~\phi'(C)>0, \phi(D)>0, X}$. Let $\frac{\phi'(CD)}{\phi'(C)}=0$ when $\phi'(C)=0$. Using \eqref{eq:dist2}, we have
\begin{align*}
&\p{\phi'(C) > 0 ~\Big|~ \phi(D)>0, X}=  \\
& 1 -  \p{\phi'(CD)=0, \phi'(CD^c)= 0 ~\Big|~ \phi(D)>0, X}=  \\
& 1 - \frac{\lambda |D| e^{-\lambda|D|}}{1-e^{-\lambda|D|}}\frac{e^{\lambda|C^cD|}-1}{\lambda|C^cD|}e^{-\lambda|CD^c|}=  \\
& 1 - \frac{|D|}{|C^cD|}\frac{1 - e^{-\lambda |C^cD|}}{1-e^{-\lambda|D|}}e^{-\lambda|C|}\,;
\end{align*}
so we have
{\small
\begin{equation*}
1 - \frac{|D|}{|C^cD|}e^{-\lambda|C|} \leq \p{\phi'(C) > 0 ~\Big|~ \phi(D)>0, X} \leq 1 - e^{-\lambda|C|}\,.
\end{equation*}}
Using the observation above and \eqref{eq:dist2} we obtain \eqref{eq:inter_b}.
Therefore,
\begin{equation}
\ee{\frac{\phi'(CD)}{\phi'(C)}~\big|~\phi'(C)>0, \phi(D)>0, X} < \frac{|CD|}{|C|}\,. \nonumber
\end{equation}

Noting that
\begin{equation*}
1 - \frac{1}{a} \leq ae^{-a}\int_0^1 ye^{ay}\D{y} = 1 - \frac{1 - e^{-a}}{a} \leq 1\,,
\end{equation*}
we could derive \eqref{eq:ineq2} from \eqref{eq:inter_a} for large enough $N$ such that $1 - \frac{1}{\lambda |D|} - \frac{|C||D|}{|CD||C^cD|}\exp(-2\lambda |CD^c|) > 0$.
Hence we can ascertain that
\begin{align*}
&\ee{\frac{\phi'(CD)}{\phi'(C)}~\big|~\phi'(C)>0, \phi(D)>0, X} > \\
&\left(1 - \frac{1}{\lambda |D|} - \frac{|C||D|}{|CD||C^cD|}e^{-2\lambda |CD^c|}\right)\frac{|CD|}{|C|}\,.
\end{align*}

As such, the selected point is less likely to be in $D$ than the case where we assume $\Phi'$ is Poisson on $C$.

\section{Derivation of inequality \eqref{eq:expect}}
\label{appendix:expecting}

We have $(x'_n,y'_n) \eqd (Rz\cos(\theta),Rz\sin(\theta))$, where $\theta\sim\mbox{Uniform}(-\pi/2,\pi/2)$ and $z\sim\mbox{Beta}(2,1)$ are independent. Thus, we have
\begin{subequations}
\label{eq:increment moments}
\begin{align}
\ee{x'_n}&=R\frac{2}\pi\int_0^{\pi/2}\cos(\theta)\,\D{\theta}\int_0^12z^2\,\D{z}
=\frac{4R}{3\pi}\,, \label{eq:first moment}\\
\ee{(y'_n)^2} &= \frac{R^2}\pi\int_{-\pi/2}^{\pi/2}\sin^2(\theta)\,\D{\theta}\int_0^12z^3\,\D{z}
=\frac{R^2}{4}\,. \label{eq:second moment}
\end{align}
\end{subequations}

Also, by first changing $x$ to $1-x$ and then using polar coordinates, we obtain
{\allowdisplaybreaks
\begin{align*}
\frac1R\,\textrm{E}\bigg(g(R,&x'_n,y'_n)\bigg)+1 \\
&=\frac{4}{\pi}\int_0^1\int_0^11_{x^2+y^2\le1}\sqrt{(1-x)^2+y^2}\,\D{x}\D{y} \cr
&=\frac{4}{3\pi}\int_0^{\pi/4}
\Bigl((\sec(\theta))^3+(2\sin(\theta))^3\Bigr)\,\D{\theta} \cr
&=\frac{3(2^{3/2})+6\log(1+\sqrt{2})+64-5(2^{7/2})}{9\pi} \\
&\approx 0.7499728\,.
\end{align*}}

Hence, $\ee{g(R,x'_n,y'_n)}<-\frac{R}4\,$.

\ifCLASSOPTIONcaptionsoff
  \newpage
\fi

\begin{thebibliography}{1}
\bibitem{Dijkstra} E. W. Dijkstra, ``A Note on Two Problems in Connexion with Graphs,'' \emph{Numerische Mathematik}, vol. 1, no. 1, pp. 269--271, Dec. 1959.
\bibitem{AODV} C. E. Perkins and E. M. Royer, ``Ad-Hoc On-Demand Distance Vector Routing,'' \emph{Proc. of the 2nd IEEE Workwhop on Mobile Computing Systems and Applications}, pp. 90--100, New Orleans, LA, Feb. 1997.
\bibitem{MFR} H. Takagi and L. Kleinrock, ``Optimal Transmission Ranges for Randomly Distributed Packet Radio Terminals,'' \emph{IEEE Transactions on Communications}, vol. 32, no. 3, pp. 246--257, Mar. 1984.
\bibitem{NFP} T.-C. Hou and V. Li, ``Transmission Range Control in Multihop Packet Radio Networks,'' \emph{IEEE Transactions on Communications}, vol. 34, no. 1, pp. 38--44, Jan. 1986.
\bibitem{DIR} E. Kranakis, H. Singh, and  J. Urrutia, ``Compass Routing on Geometric Networks,'' \emph{Proc. of 11th Canadian Conference on Computational Geometry}, pp. 51--54, Aug. 1999.
\bibitem{Zorzi03} M. Zorzi and R. R. Rao,``Geographic Random Forwarding (GeRaF) for Ad-Hoc and Sensor Networks: Multihop Performance,'' \emph{IEEE Transactions on Mobile Computing}, vol. 2, no. 4, pp. 337--348, Oct.--Dec. 2003.
\bibitem{shakkotai} S. Subramanian and S. Shakkottai, `` Geographic Routing with Limited Information in Sensor Networks,'' \emph{Proc. of the 4th International Symposium on Information Processing in Sensor Networks}, pp. 269--276, Los Angeles, CA, Apr. 2005.
\bibitem{flood} R. Jain, A. Puri, and R. Sengupta, ``Geographical Routing Using Partial Information for Wireless Ad-Hoc Networks,'' \emph{IEEE Personal Communications}, vol. 8, no. 1, pp. 48--57, Feb. 2001.
\bibitem{GPSR} B. Karp and H. T. Kung, ''GPSR: Greedy Perimeter Stateless Routing for Wireless Networks,`` \emph{Proc. 6th Annual International Conference on Mobile Computing and Networking}, pp. 243--254, Boston, MA, Aug. 2000.
\bibitem{GuptaKumar} P. Gupta and P. R. Kumar, ``The Capacity of Wireless Networks, '' \emph{IEEE Transactions on Information Theory}, vol. 46, no. 2, pp. 388-–404, Mar. 2000.
\bibitem{GuptaKumar2} P. Gupta and P. R. Kumar, ``Critical Power for Asymptotic Connectivity,'' \emph{Proc. of 37th IEEE Conference on Decision and Control}, pp. 1106-- 1110, Tampa, Florida, Dec. 1998.
\bibitem{Xing} G. Xing, C. Lu, R. Pless, and Q. Huang, ``On Greedy Geographic Routing Algorithms in Sensing Covered Networks,'' \emph{Proc. of 5th ACM International Symposium on Mobile Ad-Hoc Networking and Computing}, pp. 31--42, May 2004.
\bibitem{analys} P.-J. Wan, C.-W. Yi, F. Yao, and X. Jia, ``Asymptotic Critical Transmission Radius for Greedy Forward Routing in Wireless Ad Hoc Networks,'' \emph{ACM MobiHoc}, pp. 25--36, May 2006.
\bibitem{Bordenave} C. Bordenave, ``Navigation on a Poisson Point Process,'' \emph{The Annals of Applied Probability}, vo. 18, No. 2, pp. 708--746, 2008.
\bibitem{Baccelli} F. Baccelli, B. Blaszczyszyn, and P. M\"{u}hlethaler, ``Time-Space Opportunistic Routing in Wireless Ad Hoc Networks, Algorithms and Performance,'' \emph{The Computer Journal}, vol. 53, no. 5, pp. 592--609, Jun. 2009.
\bibitem{Mardia} K. V. Mardia, \emph{Statistics of Directional Data,} Academic Press, London, 1972.
\bibitem{conn} T.K. Philips, S.S. Panwar, and A.N. Tantawi, ``Connectivity Properties of a Packet Radio Network Model,'' \emph{IEEE Transactions on Information Theory,} vol. 32, no. 5, pp. 1044--1047, Sep. 1989.
\bibitem{yin} C. Yin, L. Gao, and S. Cui, ``Scaling Laws of Overlaid Wireless Networks: A Cognitive Radio Network vs. A Primary Network,'' \emph{IEEE/ACM Transactions on Networking}, vol. 18, no. 4, pp. 1317--1329, Aug. 2010.
\bibitem{similar1} H.P. Keeler, ``A Stochastic Analysis of Greedy Routing in a Spatially Dependent Sensor Network'', \emph{European Journal of Applied Mathematics,} vol. 23, no. 4, pp. 485--514, Jul. 2012.
\bibitem{similar2} H.P. Keeler and P.G. Taylor, ``A Stochastic Analysis of a Greedy Routing Scheme in Sensor Networks,'' \emph{SIAM Journal on Applied Mathematics,} vol. 70, no. 7, pp. 2214--2238, Apr. 2010.
\bibitem{similar3} H.P. Keeler and P.G. Taylor, ``A Model Framework for Greedy Routing in a Sensor Network with a Stochastic Power Scheme,'' \emph{ACM Transactions on Sensor Networks,} vol. 7, no. 4, pp. 1--34, Feb. 2011.
\bibitem{similar4} H.P. Keeler and P.G. Taylor, ``Random Transmission Radii in Greedy Routing Models for Ad-Hoc Sensor Networks,'' \emph{SIAM Journal on Applied Mathematics,} vol. 72, no. 2, pp.  535--557, Mar. 2012.
\bibitem{averagedistance} L. A. Santalo, \emph{Integral Geometry and Geometric Probability,} Encyclopedia of Mathematics and its Applications, vol. 1, Addison-Wesley Publishing Co., Reading, Mass.-London-Amsterdam, 1976.
\bibitem{Grimmett} G. R. Grimmett and D. R. Stirzaker, \emph{Probability and Random Processes,} 3rd ed., Oxford Univ. Press, USA, 1992.
\bibitem{Resnik} S. I. Resnick, \emph{A Probability Path,} Birkhäuser Boston, 1999.
\bibitem{Baccelli2} F. Baccelli and B. Blaszczyszyn, ``Stochastic Geometry and Wireless Networks Volume 1: Theory,'' \emph{Foundations and Trends in Networking}, vol. 3, no. 3--4, pp. 249-449, 2009.
\end{thebibliography}
\end{document}